\providecommand{\ignore}[1]{}
\newif\ifcmnt
    \providecommand{\aucmnt}[1]{#1}
    \providecommand{\aucmnt}[1]{}
\newcommand{\bA}{{\mathbb{A}}}
\newcommand{\cA}{{\cal A}}
\newcommand{\cC}{{\cal C}}
\newcommand{\cE}{{\cal E}}
\newcommand{\cH}{{\cal H}}
\newcommand{\cP}{{\cal P}}
\newcommand{\cR}{{\cal R}}
\newcommand{\cS}{{\cal S}}
\newtheorem{proposition}{Proposition}
\newtheorem*{theorem*}{Theorem}
\newtheorem{theorem}{Theorem}
\newtheorem{corollary}{Corollary}[theorem]
\newtheorem{lemma}[theorem]{Lemma}
\begin{document}

\title{Certified Quantum Measurement of Majorana Fermions}

\author{Abu Ashik Md.~Irfan}
\thanks{These authors contributed equally to this work.}
 \affiliation{Department of Physics, Indiana University, Bloomington, IN
 47405-7105,USA}

\author{Karl Mayer}
\thanks{These authors contributed equally to this work.}
\affiliation{National Institute of Standards and Technology, Boulder, Colorado,
 USA}
\affiliation{Department of Physics, University of Colorado, Boulder, Colorado, USA}

\author{Gerardo Ortiz}
\affiliation{Department of Physics, Indiana University, Bloomington, IN
47405-7105,USA}

\author{Emanuel Knill}
\affiliation{National Institute of Standards and Technology, Boulder, Colorado, USA}
\affiliation{Center for Theory of Quantum Matter, University of Colorado, Boulder,
Colorado, USA}

\date{\today}

\begin{abstract}
We present a quantum self-testing protocol to certify measurements of fermion parity involving Majorana fermion modes.
We show that observing a set of ideal measurement statistics implies
anti-commutativity of the implemented Majorana fermion parity operators, a necessary prerequisite for Majorana detection. Our protocol is robust to experimental errors. We obtain lower bounds on the fidelities of the state and measurement operators that are linear in the errors.
We propose to analyze experimental outcomes in terms of a 
contextuality witness $W$,
which satisfies $\expval{W}\le 3$ for any classical probabilistic
model of the data.
A violation of the inequality witnesses quantum 
contextuality, and the closeness to the maximum ideal value $\langle W \rangle=5$ indicates the degree of confidence in the 
detection of Majorana fermions.


\end{abstract}

\maketitle

\section{Introduction}\label{sec:level1}

Topological qubits offer promising basic units for quantum information processing due to their inherent resilience against decoherence \cite{Alicea2012}. 
Majorana fermions \cite{Majorana1937} are candidates for realizing such topological
qubits, and the ability to braid them  is the focus of several recent investigations. 
Theoretically, Majorana fermions emerge from the interplay between 
the existence of a topologically non-trivial vacuum and a, typically,
symmetry-protected physical boundary (or defect). They are realized as zero-energy modes or quasi-particle
excitations of certain quantum systems.
Recent experimental efforts to detect and control Majorana zero-energy modes in
topological superconducting nanowires provide a step towards  realizing 
non-Abelian braiding and, thus, topological computation. Several experimental 
groups have reported evidence of Majorana zero-energy modes,  
such as an observation of a zero bias conductance peak or Shapiro steps 
in superconducting nanowires \cite{Mourik2012,Rokhinson2012}. The evidence,
however, remains indirect and it is unclear what would constitute proof of the existence
of Majorana fermions \cite{Liu2012}.  Moreover, interpretation of what embodies
a Majorana excitation, and its physical realization, in a closed particle number-conserving many-body topological superfluid
deepens the mystery \cite{ortiz2014,ortiz2016}. 

Even if one had strong evidence that a system is in a topological superfluid phase  with emerging
Majorana fermions, in order to reap the advantages of the topological approach to
quantum computing, one must be confident that the measurements performed actually 
implement ideal quantum operations with high fidelity.
This is especially important for proposals where
gates are performed by parity measurements and anyonic 
teleportation, rather than physical braiding \cite{Bonderson2009}.
In this
paper, we present a protocol to certify quantum measurements of observables and 
states using only the statistics of measurements outcomes, while making no 
assumptions about the underlying physics in the experimental apparatus. Our 
technique represents an extension of what is known as \textit{self-testing} in
quantum information \cite{Mayers2004,Acin2007,Colbeck_2011}. In
particular, we are interested in currently proposed platforms utilizing fermionic parity
measurements \cite{Karzig2017}. In this way, and given experimental data, one hopes
to argue for the consistency of that data with the existence of Majorana
fermions. 

In the quantum information literature, self-testing refers to the action 
of uniquely determining a quantum state, up to a certain notion of equivalence. Unlike tomography, self-testing is
based solely on the statistics of 
measurement outcomes, with minimal assumptions about the measurement operators. 
These quantum self-testing protocols
are more stringent than the well-known  Bell tests \cite{Popescu1992}. While violation
of a Bell inequality for a bipartite system establishes that its quantum state is
entangled, it cannot certify, for instance, that its quantum state is maximally
entangled \cite{Romito2017}. Self-testing protocols typically assume 
that the physical system has a 
Hilbert space with a natural local tensor product structure. For self-testing a fermionic system,
however, we have to relax this assumption. In our scenario, involving 6 Majorana fermion modes and 6  parity operators, a minimal assumption is 
compatibility of observables sharing no common Majorana mode. A successful
certification implies that the experimentally measured observables anti-commute
exactly the way ideal fermionic parity operators should. 
We demonstrate that ideal statistics imply {\it emergence} of an invariant four-dimensional tensor-product subspace (encoding two logical qubits) out of a putative Majorana fermion non-tensor-product state space, and the ideal state is a
Bell state up to local unitary equivalence. An observation of the ideal statistics in
our protocol would constitute substantive evidence of the existence of Majorana fermions. This is so, since ideal statistics implies anti-commutativity of a Majorana fermion and its parity operator, a definite smoking gun for Majorana fermion detection.  Experiments, however, suffer from imperfections, and any practical certification protocol should 
include the effect of non-ideal quantum measurement devices and procedures. 
 We have obtained lower bounds on state and operator fidelities, linear in the error, that constitute rigorous statements on robustness of the self-testing protocol for detection of Majorana fermions. 

The paper is organized as follows. Preliminary background concepts and strategy for self-testing Majorana fermion parities are discussed in Sec. \ref{sec:background}. In particular, in Sec. \ref{subsection: Majorana fermion}, we
map Majorana fermion parity operators to two-qubit Pauli operators,
and construct maximal sets of compatible measurements, so-called contexts. 
In Sec.~\ref{subsection: self testing}, we introduce
the notion of quantum self-testing.
Section \ref{sec:statements} describes our particular
measurement scenario and contains a summary of
our main results, which are two theorems, proved later in Sections \ref{sec: Rigidity} and \ref{sec: Robustness}. Specifically, we prove {\it rigidity}  of the measurement scenario in Sec. \ref{sec: Rigidity}, and
address the {\it robustness} 
to small experimental errors  in Sec. \ref{sec: Robustness}.  Finally, in Sec.~\ref{Discussion} we summarize main findings and analyze 
our fermion parity certification protocol from the standpoint of 
a contextuality witness $W$.  We suggest a possible experimental setup and propose to analyze experimental data validating a contextuality inequality involving such $W$. We also emphasize the generality of our approach and its potential application to other quantum measurements involving phenomena such as braiding.  
An accessible discussion, addressed to experimentalists, of what 
an ideal statistics situation means in the context of self-testing fermion parities is presented in Appendix \ref{Appendix: Ideal statistics}. Several technical details, important to appreciate the mathematical and physical implications of our results, are included in the 
the Appendices \ref{Appendix: Jordan}, \ref{Appendix: state fidelity}, and 
\ref{Appendix: equation to derive 3rd column}.

\section{Background}\label{sec:background}

\subsection{Majorana Fermions} \label{subsection: Majorana fermion}

\begin{table}[htb]
    \centering
\begin{tabular}{ |c|c|c|c| } 
\hline
{\sf Logical Qubits} & {\sf Fermion Parities} & {\sf Physical Qubits} \\
\hline
\multirow{2}{*}{ $\ket{00}$} & $\ket{+,+,+}$ & $\ket{\downarrow}\ket{\Phi_-}$ \\ \cline{2-3}
& $\ket{-,+,+}$ & $\ket{\downarrow}\ket{\Phi'_-}$ \\ 
\hline
\multirow{2}{*}{ $\ket{01}$} & $\ket{-,+,-}$ & $-\ket{\downarrow}\ket{\Phi_+}$ \\ \cline{2-3}
& $\ket{+,+,-}$ & $-\ket{\downarrow}\ket{\Phi'_+}$ \\ 
\hline
\multirow{2}{*}{ $\ket{10}$} & $\ket{-,-,+}$ & $\ket{\uparrow}\ket{\Phi'_-}$ \\ \cline{2-3}
& $\ket{+,-,+}$ & $-\ket{\uparrow}\ket{\Phi_-}$ \\ 
\hline
\multirow{2}{*}{ $\ket{11}$} & $\ket{+,-,-}$ & $-\ket{\uparrow}\ket{\Phi'_+}$ \\ \cline{2-3}
& $\ket{-,-,-}$ & $\ket{\uparrow}\ket{\Phi_+}$ \\ 
\hline
\end{tabular}
\caption{Mapping between the (logical) $4$-dimensional and (physical) $8$-dimensional spaces (fermion parity assingnments for $P_{36}$, $P_{12}$ and $P_{45}$ and three qubits representations). For each logical state the upper row corresponds to even parity, while the lower to odd parity. Here $\ket{\Phi_\pm}=\frac{\ket{\uparrow\uparrow}\pm \ket{\downarrow\downarrow}}{\sqrt{2}}$ and $\ket{\Phi'_\pm}=\frac{\ket{\uparrow\downarrow}\pm \ket{\downarrow\uparrow}}{\sqrt{2}}$. }
\label{Isomorphism: 2qubit-3qubit}
\end{table}

Majorana fermion modes are potential blueprint qubits for topological computation. 
Consider $6$ Majorana modes belonging to $6$ different quantum wires or 
vortices. 
Those modes are defined by Majorana operators
$\gamma_{j}^{\;}$ for $j=1,\ldots, 6$, which satisfy the  Majorana algebra
\begin{eqnarray} \notag
\gamma_{j}^{\dagger}=\gamma_{j}^{\;} \ ,  \mbox{ and } \{\gamma_{j}^{\;},\gamma_{k}^{\;}\}=
\gamma_{j}^{\;}\gamma_{k}^{\;}+\gamma_{k}^{\;}\gamma_{j}^{\;}=2\delta_{jk} .
\end{eqnarray}
The complex $\dagger$-closed
algebra generated is $\dagger$-isomorphic to the complex $8\times 8$
matrices, so its irreducible representations on a Hilbert space all
can be identified with a Jordan-Wigner representation on $3$ two-level
(qubit) systems. Explicitly, one such representation maps 
\begin{eqnarray}
\gamma_{2m-1}^{\;} &=& \left (\prod_{\ell=1}^{m-1} \sigma_{z}^\ell \right )\sigma_{x}^{m} , \nonumber \\ 
\gamma_{2m}^{\;} &=&  \left (\prod_{\ell=1}^{m-1} \sigma_{z}^\ell \right )\sigma_{y}^{m} ,
\label{majoranadef}
\end{eqnarray}
where $\sigma^m_\tau$, $m=1,2,3$ and $\sigma_\tau=\sigma_x, \sigma_y, \sigma_z$, are Pauli matrices and 
 we have chosen a particular sign convention without physical 
consequences.

In the following, we confine ourselves to the $15$ physically 
measurable ``parity'' observables 
\begin{eqnarray} \notag
P_{jk}=i\gamma_{j}^{\;}\gamma_{k}^{\;} \ , \ 1\leq j < k\leq 6. 
\end{eqnarray}
The total parity $\cP=-i\prod_j\gamma_j$,
commutes with every other parity observable, 
partitions the full Hilbert space into even ($\cP=+1$) and odd ($\cP=-1$) parity subspaces. These subspaces are
invariant under the action of any parity operator and are isomorphic to logical two-qubit subspaces  as illustrated by the mapping of  Table \ref{Isomorphism: 2qubit-3qubit}.
We use $X$, $Y$, $Z$ to denote logical Pauli operators
acting on these two-qubit subspaces.
\begin{table}[htb]
\begin{center}
\begin{tabular}{| c | c | c |}
\hline
$P_{36}$ & $P_{25}$ & $P_{14}$ \\ \hline
$P_{12}$ & $P_{34}$ & $P_{56}$  \\ \hline
$P_{45}$ & $P_{16}$ & $P_{23}$ \\ \hline
\end{tabular}
$\Longleftrightarrow$
\begin{tabular}{| c | c | c |}
\hline
$\sigma_y^2\sigma_y^3$ & $-\sigma_x^1\sigma_z^2\sigma_x^3$ & $\sigma_y^1\sigma_y^2$ \\ \hline
$-\sigma_z^1$ & $-\sigma_z^2$ & $-\sigma_z^3$  \\ \hline
$-\sigma_x^2\sigma_x^3$ & $\sigma_y^1\sigma_z^2\sigma_y^3$ & $-\sigma_x^1\sigma_x^2$ \\ \hline
\end{tabular}

\hspace*{-1.7cm}\rotatebox[origin=c]{-45}{$\Longleftrightarrow$}
\hspace*{1.7cm}\rotatebox[origin=c]{45}{$\Longleftrightarrow$}
\vspace*{0.2cm}

\hspace*{-1.5cm}
\begin{tabular}{| c | c | c |}
\hline
$\cP ZZ$ & $\cP XX$ & $\cP YY$ \\ \hline
$ZI$ & $IX$ & $\cP ZX$  \\ \hline
$IZ$ & $XI$ & $\cP XZ$ \\ \hline 
\end{tabular} 
\end{center}
\caption{(Top) Peres-Mermin-like magic squares using Majorana fermion parity operators and related three physical qubits. Operators in the same row or column commute. (Bottom) The ``emergent"  operators realize Peres-Mermin magic squares in the
(four-dimensional) even-parity ($\cP=+1$) or odd-parity  ($\cP=-1$) subspaces.}
\label{Table: Magic square}
\end{table}

We say a set of fermion parity measurements are compatible if the corresponding parity operators are mutually commuting. 
There are exactly $15$ maximal sets of compatible 
measurements,  which are given by 
\begin{eqnarray}
&&\{P_{36},P_{25},P_{14} \}\ ,\ \{P_{12},P_{34},P_{56} \} \ , \  \{P_{45},P_{16},P_{23}\}\ , \ \nonumber \\ &&\{P_{36}, P_{12},P_{45} \} \ ,\
\{P_{25}, P_{34},P_{16}\}\ , \   \{P_{14},P_{56}, P_{23} \} \ , \ \nonumber \\
&&\{P_{35},P_{16},P_{24} \}\ , \    \{P_{46},P_{25},P_{13} \}\ , \   
\{P_{12},P_{35},P_{46} \} \ , \ \nonumber \\ &&\{P_{56},P_{24},P_{13} \}\ , \  
\{P_{46},P_{15},P_{23} \}\ , \ \{P_{34},P_{26},P_{15} \} \ , \ \nonumber \\
&&\{P_{36},P_{24},P_{15}\} \ ,\ \{P_{13},P_{26},P_{45} \} \ , \ \{P_{35},P_{26},P_{14} \} \ . \ \nonumber \end{eqnarray}
We can select the first $6$ of those sets and form a $3\times3$ table which works like a Peres-Mermin  magic square \cite{Peres1990-1, Mermin1990} up to a unitary equivalence in both even and odd parity subspaces, as illustrated in Table \ref{Table: Magic square}.

\subsection{Quantum Self-testing}\label{subsection: self testing}

A \textit{self-testing} protocol aims to certify that both 
an unknown state $\ket{\Psi}$ and a set of unknown measurements
are equivalent to 
an ideal, usually entangled, state $|\hat\Psi\rangle$
and a set of ideal measurements.
Importantly, the certification does not rely on any 
assumptions about the state and measurements,
other than the assumption that certain pairs of measurement operators
commute.
The protocol involves
repeatedly performing different sets of pairwise commuting measurements.
If the ideal measurement statistics are obtained,
then the state and measurements are uniquely determined,
up to some notion of equivalence.
This was first observed by Popescu and 
Rohrlich~\cite{Popescu1992}, who proved that any
state that maximally violates a particular Bell inequality (the CHSH inequality)
is equivalent to a singlet state of two qubits. The equivalence is up to a
local isometry, because the measurement statistics are unaffected by a local change 
of basis and by the existence of an auxiliary
subsystem on which the measurements act trivially.
The notion of self-testing 
was formalized by Mayers and Yao~\cite{Mayers2004}, and since then,
self-testing protocols for many other states 
and measurement scenarios~\cite{McKague2011, Wu2014, Kaniewski2016, Coladangelo2017, Kalev2017, Breiner2019}
have been discovered. Such protocols are 
often called \textit{device-independent} because they rely only on the
statistics of measurement outcomes, and not on any physical assumptions
about the measurement apparatus.

Two important notions in the self-testing literature are 
that of \textit{rigidity} and \textit{robustness}.
A measurement scenario is rigid if achieving
the ideal expectation values uniquely determines the state and measurements,
up to a local isometry.
In any real experiment, however, the ideal statistics
will not be achieved exactly due to errors in the
state preparation and measurements.
Thus, any practical self-testing protocol
must include a robustness statement. Robustness implies that the
state and measurements are still determined approximately
if the statistics deviate from the ideal case by a small amount.
There are fewer known robustness results
for measurements than for states \cite{Kaniewski2017}.
Our main results are a rigidity theorem
and a robustness theorem for
Majorana fermion parity operators.

Our results differ from previous
self-testing results in a few respects. First, self-testing
scenarios typically involve two or more parties whose 
measurement operators commute due to a locality assumption.
The locality can be physically enforced, for example, by
requiring the measurements made by different parties to be
spacelike separated. In the scenario we consider, there is no
natural notion of locality. Therefore, we do not assume that
the full Hilbert space ${\cal H}$ factors as a tensor product. Nonetheless,
as we show, if the measurement operators have
the ideal expectation values, then there is a
natural tensor product decomposition. The unknown
state $\ket{\Psi}$ is maximally entangled with respect to this emergent
tensor product structure. Second, robust self-testing statements
are often formulated in terms of an extraction map,
which acts on a joint system comprised of the 
unknown Hilbert space and a reference
Hilbert space with a known dimension. In this formulation,
a robustness statement asserts that there exists
such an extraction map, such that the output state 
of the reference system has high fidelity with the ideal state~\cite{McKague2012a, McKague2012b}.
Our theorems avoid using an extraction map
and instead directly construct a four-dimensional
subspace of $\cH$.
In the rigid case, we show that the subspace contains $\ket{\Psi}$
and is invariant under the action of each of the measurement operators.
In the case of errors we define an ideal state and ideal operators
on the subspace and we lower bound the fidelities of the
actual state and measurement operators.

\section{Statement of Results}\label{sec:statements}

We consider an experimental setup ideally involving 6 Majorana modes and 15 parity
operators. However, we do not assume Majorana fermion parity operators from the outset 
as our aim is to infer Majorana behavior solely from the
statistics of measurement outcomes.
We assume that a quantum system is prepared in some unknown state $\rho$. Since
any mixed state has a pure state extension,
we can take $\rho=\ket{\Psi}\bra{\Psi}$ to be pure without loss of generality.
We also assume a set of unknown measurements,
each of which is given by a two-element positive operator-valued measure (POVM) $Q_r=\{Q_{r,0}, Q_{r,1}\}$.
Here $r\in\{1,\ldots,15\}$ labels the measurement and for all $r$, $Q_{r,0}+Q_{r,1}=\mathds 1$ and
$Q_{r,a}\ge0$, with $a=0,1$.
We assume that $[Q_{r,a}, Q_{s,b}]=0$ whenever $r$ and $s$ correspond
to parities having no Majorana modes in common.
We emphasize that no other
assumptions about the state or measurements are made. In particular, we do not
assume the dimension of $\cH$ or that $\cH$ factors as a tensor product.

\begin{figure}[htb]
\centerline{\includegraphics[scale=0.28]{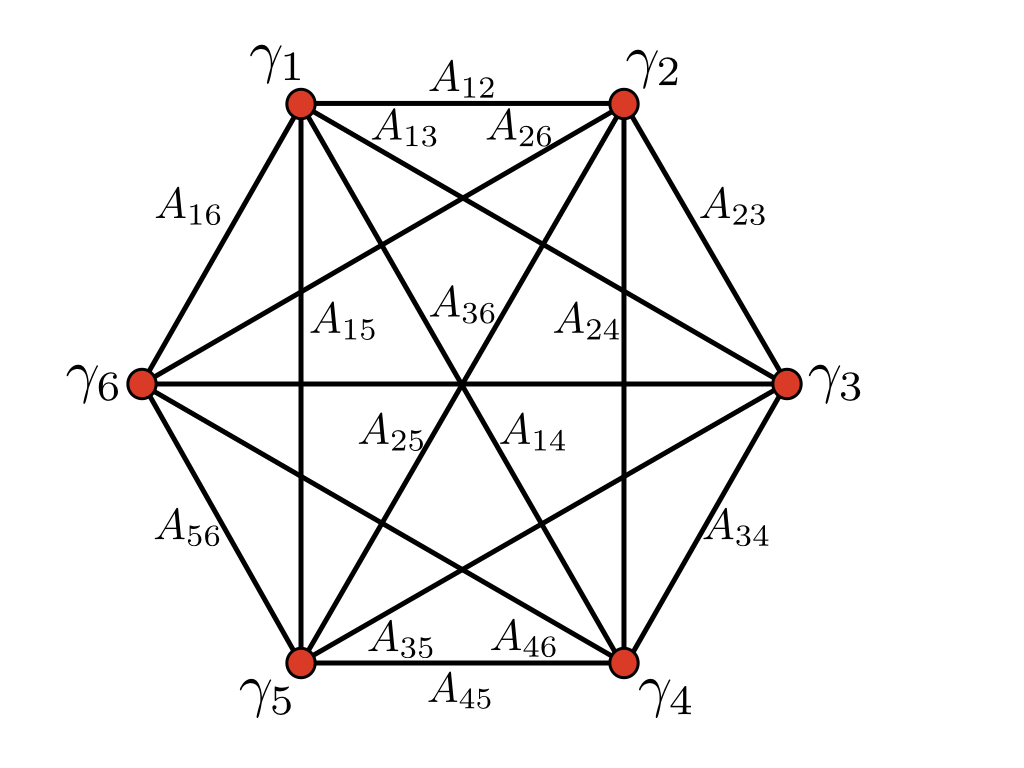}}
 \caption{Six Majorana fermion modes $\gamma_l$ indicated by 6 vertices. Each edge in the 
 $K_6$ graph represents a claimed Majorana fermion parity measurement $A_{jk}$ between $\gamma_l$ and $\gamma_k$.}
 \label{6-vertices}
\end{figure}
Before proving our main results, we first show that the POVM elements can be taken
to be orthogonal projectors without loss of generality.
In general, by Neumark's dilation theorem~\cite{Peres1990-2},
any POVM can be realized by a projective measurement on an extended Hilbert
space. In our case, we further 
require that the projectors obtained by dilation 
have the same pairwise commutativity structure
that was assumed of the POVMs. This is accomplished by the
following proposition, which we prove in Sec.~\ref{sec: Rigidity}.

\begin{proposition}\label{prop: dilation}
	Let $\{Q_{r,a}\}$ be a set of two-outcome POVM elements.
    Then there exist projectors $\hat{Q}_{r,a}$ satisfying
    $\Tr(\hat Q_{r,a}(\rho\otimes\ket{0}\bra{0}_r))=\Tr(Q_{r,a}\rho)$ for all density operators $\rho$,
    and $[\hat{Q}_{r,a},\hat{Q}_{s,b}]=0$ whenever
    $[Q_{r,a}, Q_{s,b}]=0$.
\end{proposition}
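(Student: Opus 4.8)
The plan is to dilate each POVM separately using its own ancilla qubit, so that measurements acting on distinct ancillas commute automatically, while a careful choice of the dilation keeps the required commutation on the system factor $\cH$. Concretely, I would enlarge the Hilbert space to $\cH\otimes\bigotimes_{r=1}^{15}\cH_r$ with each $\cH_r\cong\bC^2$ prepared in $\ket{0}_r$, and for each $r$ define, in the ordered basis $\{\ket{0}_r,\ket{1}_r\}$ of the $r$-th ancilla,
\begin{equation}
\hat Q_{r,0}=\begin{pmatrix} Q_{r,0} & R_r \\ R_r & Q_{r,1}\end{pmatrix},\qquad R_r=\sqrt{Q_{r,0}\,Q_{r,1}},
\end{equation}
understood to act as the identity on every other ancilla factor, and set $\hat Q_{r,1}=\mathds{1}-\hat Q_{r,0}$. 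Since $0\le Q_{r,0}\le\mathds{1}$, the operator $Q_{r,0}Q_{r,1}=Q_{r,0}(\mathds{1}-Q_{r,0})$ is positive, so $R_r$ is a well-defined self-adjoint square root.

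Next I would verify the two defining properties for a single $r$. Self-adjointness is immediate, and because $Q_{r,0}$, $Q_{r,1}$ and $R_r$ are all functions of $Q_{r,0}$ they mutually commute; a direct block computation then gives $\hat Q_{r,0}^2=\hat Q_{r,0}$, using $Q_{r,0}^2+R_r^2=Q_{r,0}$, $R_r(Q_{r,0}+Q_{r,1})=R_r$, and $R_r^2+Q_{r,1}^2=Q_{r,1}$, so $\hat Q_{r,0}$ is an orthogonal projector. For the statistics, $\rho\otimes\ket{0}\bra{0}_r$ is supported on the top-left block, so $\Tr(\hat Q_{r,0}(\rho\otimes\ket{0}\bra{0}_r))=\Tr(Q_{r,0}\rho)$, and the $a=1$ case follows by complementarity.

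The main point is preserving commutativity, which I would argue as follows. Note first that $[Q_{r,a},Q_{s,b}]=0$ is equivalent to $[Q_{r,0},Q_{s,0}]=0$, because $Q_{r,1}=\mathds{1}-Q_{r,0}$. Assuming this, the commuting functional calculus guarantees that each block of $\hat Q_{r,0}$ — namely $Q_{r,0}$, $Q_{r,1}$, and $R_r=\sqrt{Q_{r,0}Q_{r,1}}$ — commutes with each block of $\hat Q_{s,0}$. For $r\ne s$ the ancilla operators act on the distinct tensor factors $\cH_r$ and $\cH_s$, hence commute trivially, so expanding $\hat Q_{r,0}$ and $\hat Q_{s,0}$ into their block terms and multiplying shows term by term that $[\hat Q_{r,0},\hat Q_{s,0}]=0$; the remaining cases $[\hat Q_{r,a},\hat Q_{s,b}]=0$ follow since $\hat Q_{r,1}=\mathds{1}-\hat Q_{r,0}$. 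The only delicate step is that the off-diagonal square-root blocks respect the commutation, and this is exactly what the functional calculus for commuting self-adjoint operators secures; the remainder is bookkeeping.
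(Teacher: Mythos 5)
Your proposal is correct and takes essentially the same route as the paper: the block matrix you write down for $\hat Q_{r,0}$, with off-diagonal blocks $R_r=\sqrt{Q_{r,0}Q_{r,1}}$, is exactly the projector $U_{\cS\cE_r}^{\dag}\,(\mathds{1}_{\cS}\otimes\ket{0}\bra{0}_r)\,U_{\cS\cE_r}$ that the paper obtains from its Neumark unitary, and both arguments establish $[\hat Q_{r,a},\hat Q_{s,b}]=0$ blockwise by using that functions (square roots) of commuting positive operators commute, with the distinct ancilla factors commuting trivially. The only cosmetic difference is that you verify idempotence of $\hat Q_{r,0}$ by direct computation, whereas the paper gets the projector property for free from unitarity of the dilation.
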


Having extended the POVMs to projective measurements,
we can define the Hermitian operators
$A_r = 2\hat{Q}_{r,0}-\mathds 1$. 
Note that $A_r^2=\mathds 1$, and so each $A_r$ is unitary
and has eigenvalues in $\{-1,+1\}$.
Such operators are called \textit{Hermitian involutions}.
These operators can 
be visualized as edges on $K_6$, the complete graph on 6 vertices,
as shown in Fig.~\ref{6-vertices}. The vertices
correspond to Majorana modes, and two operators commute if
their associated edges do not share a vertex. When convenient, we will
use a double index as in $A_{jk}$ to denote the operator associated 
with edge $(j,k)$. The maximal sets of commuting
observables are given by perfect matchings on $K_6$.

Our self-testing theorems apply to any
set of six parities corresponding to a cycle subgraph $G\subseteq K_6$. For concreteness, we take $G$ to be the cycle
whose edge set is $E=\{(1,2),(2,3),(3,4),(4,5),(5,6),(1,6)\}$.
We refer to a maximal set of commuting
parity operators in $G$ as a \textit{context}.
We arrange the six unknown operators into a 2-by-3 table
where the operators in each row and column
form a context (see Table \ref{Table unknown} (Left)). 

Any two operators $A_r$ and $A_s$ not in the same row or column correspond to
edges $r$ and $s$ that are adjacent in $G$, which we denote $r\sim s$.
The ideal fermionic parity operators corresponding to adjacent edges anti-commute.
Since 6 Majorana modes acting on a given parity (even or odd) sector 
encode 2 logical qubits,
the ideal operators can be any
set of 6 logical two-qubit Pauli operators with the ideal
commutation and anti-commutation relations.
For concreteness, we fix a basis in which the ideal operators are as in Table \ref{Table unknown} (Right).
\begin{table}[h] \label{Table unknown}
\begin{tabular}{| c | c | c |}
\hline
$A_{12}$ & $A_{34}$  &$A_{56}$ \\
 \hline
$A_{45}$  & $A_{16}$ & $A_{23}$ \\
\hline
\end{tabular}
\hspace*{0.5cm}
$\Longleftrightarrow$
\hspace*{0.5cm}
\begin{tabular}{| c | c | c |}
\hline
$ZI$ & $IX$  & $ZX$ \\
 \hline
$IZ$  & $XI$ & $XZ$ \\
\hline
\end{tabular}
\caption{(Left) The six unknown operators and five contexts. (Right) The six ``emergent'' logical two-qubit Pauli operators.}
\end{table}

Let $\cR_i$ and $\cC_i$ be the sets of edges
in the $i$th row and column, respectively.
The \textit{ideal expectations} in our self-testing protocol 
are the following expectation values of products of observables in each context:
\begin{align}\label{eq: ideal statistics}
    \bra{\Psi}\prod_{r\in \cR_i}A_r\ket{\Psi} &= 1\quad i\in\{1,2\},\notag\\
    \bra{\Psi}\prod_{r\in \cC_i}A_r\ket{\Psi} &=
    \begin{cases}
      1, & i\in\{1,2\} \\
      -1, & i=3.
    \end{cases}
\end{align}
The ideal expectations are achieved by the ideal
state $|\hat\Psi\rangle=\frac{1}{\sqrt{2}}(\ket{00}+\ket{11}$.
We remark that our particular definition of the ideal expectations
is a choice of convention. A similar rigidity result 
for a different ideal state follows from
any similar set of ideal expectations where an odd number
of contexts have an expectation value of $-1$.

Let $\cA$ be the algebra generated by $\{A_r:r\in G\}$,
and let $V\subseteq \cH$ be the subspace defined by 
$V=\mathrm{span}\{A\ket{\Psi}:A\in\cA\}$. Let $P$ be the projector
onto $V$, and let $\bA_r = P A_r P$.

\begin{theorem}[Rigidity of Majorana Parities]\label{thm 1}
If the ideal expectations are satisfied,
then V is a 4-dimensional subspace and $\{\bA_r, \bA_s\}=0$,
for all $r,s\in G$ with $r\sim s$. Furthermore, the state $\ket{\Psi}$ satisfies $\prod_{r\in \cC_i} A_r \ket{\Psi} =\ket{\Psi}$ for $i\in\{1,2\}$.
\end{theorem}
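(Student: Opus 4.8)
The plan is to turn each ideal expectation into an exact eigenvalue equation and then bootstrap these equations into the complete action of the six operators on the cyclic subspace $V$. The starting observation is elementary: if $M$ is a Hermitian involution and $\bra{\Psi}M\ket{\Psi}=\pm1$, then $M\ket{\Psi}=\pm\ket{\Psi}$, because the unit vectors $\ket{\Psi}$ and $M\ket{\Psi}$ saturate the Cauchy--Schwarz inequality. Every context product $\prod_{r\in\cR_i}A_r$ and $\prod_{r\in\cC_i}A_r$ is a Hermitian involution, since its factors commute and square to $\mathds 1$. Hence the ideal expectations give $\prod_{r\in\cR_i}A_r\ket{\Psi}=\ket{\Psi}$ for $i\in\{1,2\}$, together with $\prod_{r\in\cC_i}A_r\ket{\Psi}=\ket{\Psi}$ for $i\in\{1,2\}$ and $\prod_{r\in\cC_3}A_r\ket{\Psi}=-\ket{\Psi}$. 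The middle family is already the last claim of the theorem, so that part is settled at once. Rewriting the three column equations also yields the pairing relations $A_{45}\ket{\Psi}=A_{12}\ket{\Psi}$, $A_{16}\ket{\Psi}=A_{34}\ket{\Psi}$, and $A_{23}\ket{\Psi}=-A_{56}\ket{\Psi}$.

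Next I would set $v_0=\ket{\Psi}$, $v_1=A_{12}\ket{\Psi}$, $v_2=A_{34}\ket{\Psi}$, $v_3=A_{56}\ket{\Psi}$ and show that $\mathrm{span}\{v_0,\dots,v_3\}$, which is contained in $V$ and contains $\ket{\Psi}$, is invariant under all six $A_r$; being invariant and containing $\ket{\Psi}$, it must equal $V$, so $\dim V\le 4$. The first row is immediate: $A_{12}A_{34}A_{56}\ket{\Psi}=\ket{\Psi}$ and row commutativity give identities such as $A_{12}A_{34}\ket{\Psi}=A_{56}\ket{\Psi}$, so $A_{12},A_{34},A_{56}$ permute $\{v_i\}$ up to sign and preserve $V$. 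The difficulty is the second row acting on $v_1,v_2,v_3$: a term like $A_{45}A_{34}\ket{\Psi}$ pairs two adjacent operators, for which neither commutativity nor any state relation is assumed. The resolution is to rewrite each $v_i$ through its column partner so that the leading operator meets a factor it commutes with. For instance, from $A_{45}A_{16}A_{23}\ket{\Psi}=\ket{\Psi}$ together with $A_{23}\ket{\Psi}=-A_{56}\ket{\Psi}$ and $A_{45}\ket{\Psi}=A_{12}\ket{\Psi}$ one extracts $A_{16}A_{56}\ket{\Psi}=-A_{12}\ket{\Psi}$, i.e. $A_{16}v_3=-v_1$ and hence $A_{16}v_1=-v_3$; combined with $A_{16}\ket{\Psi}=A_{34}\ket{\Psi}$ this fixes the full action of $A_{16}$ on $V$. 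The actions of $A_{45}$ and $A_{23}$ then follow by commuting them past $A_{16}$ within their shared row and past $A_{12},A_{56}$ within their shared columns, using the values just obtained. I expect this reduction of the cross terms to be the main obstacle.

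Finally, with the explicit action of each $A_r$ on the spanning vectors in hand, verifying $\{A_r,A_s\}v_i=0$ for every adjacent pair $r\sim s$ and every $i$ is a routine check; since $V$ is invariant and $\ket{\Psi}\in V$ we have $\bA_r|_V=A_r|_V$ while $\bA_r$ annihilates $V^\perp$, so $\{\bA_r,\bA_s\}=0$ on all of $\cH$. It remains to upgrade $\dim V\le 4$ to $\dim V=4$. For this I would note that the derived relations make $(A_{12},A_{16})$ and $(A_{34},A_{45})$ two anticommuting pairs of involutions that commute with one another, so their restrictions to $V$ generate a homomorphic image of the two-qubit Pauli algebra $M_4(\mathbb{C})$. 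Because $M_4(\mathbb{C})$ is simple and $V\neq\{0\}$, this image is all of $M_4(\mathbb{C})$, so $V$ is an $M_4(\mathbb{C})$-module and $\dim V$ is a multiple of $4$; with $1\le\dim V\le 4$ this forces $\dim V=4$.
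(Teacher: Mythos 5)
Your proof is correct, and it diverges from the paper's argument at precisely the two hardest steps. The shared skeleton: like the paper, you convert each ideal expectation into an exact eigenvalue equation (both arguments use that a Hermitian involution with expectation $\pm1$ fixes the state up to sign), you build the same four-element spanning set for $V$ (your $v_2=A_{34}\ket{\Psi}$ and $v_3=A_{56}\ket{\Psi}$ coincide with the paper's $A_{16}\ket{\Psi}$ and $A_{12}A_{16}\ket{\Psi}$ by the column relations), and you neutralize the dangerous adjacent products by rerouting through column partners --- the same identities that drive the paper's Lemma~\ref{MS lemma}. The divergence: the paper establishes anticommutation only in the weak, state-level form $\{A_r,A_s\}\ket{\Psi}=0$, transfers (anti)commutation to the compressions $\bA_r=PA_rP$ via a restriction lemma (Lemma~\ref{Commutativity}), and then runs a spectral argument on the commuting unitaries $\bA_{12}\bA_{16}$ and $\bA_{34}\bA_{45}$, whose eigenvalues are forced to be $\pm i$; that single argument yields four mutually orthogonal eigenvectors (hence $\dim V=4$), the operator-level anticommutation, and, as a by-product, the explicit basis in which the $\bA_r$ are the ideal Pauli operators and $\ket{\Psi}$ is the Bell state. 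You instead compute the complete action of all six operators on the spanning vectors and verify $\{A_r,A_s\}v_i=0$ entry by entry (I checked your action table, including $A_{16}v_3=-v_1$ and $A_{45}v_2=-v_3$; the two dozen checks do close), and you obtain the dimension abstractly: the restrictions of $A_{12},A_{16},A_{34},A_{45}$ to the invariant subspace satisfy the defining relations of $\mathrm{Cl}_2\otimes\mathrm{Cl}_2\cong M_4(\mathbb{C})$, simplicity forces the representation to be faithful, so $\dim V$ is a positive multiple of $4$, which together with your spanning bound gives $\dim V=4$. Your route is more elementary in its ingredients (no restriction lemma, no eigenvalue analysis) and the Wedderburn step is genuinely slick, at the price of more bookkeeping; the paper's route costs the spectral analysis but delivers the canonical form ($\bA_r$ equal to the ideal Paulis, $\ket{\Psi}$ the Bell state) that is quoted immediately after the theorem and that prefigures the Jordan-decomposition strategy of the robustness proof --- a statement your argument recovers only with a small extra step (the same relations in fact force your $v_i$ to be orthonormal).
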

The proof of the theorem given in the next section. As a
consequence of Theorem~\ref{thm 1}, a basis for $V$ can be chosen
in which the operators on the (Left) in Table \ref{Table unknown} equal those on the 
(Right) of the same Table \ref{Table unknown}, and $\ket{\Psi}=|\hat\Psi\rangle$.

 In practice, experimental measurements do not satisfy the ideal expectations
 due to imperfections in the state preparation and measurements.
 We say that the ideal expectations are satisfied to within error $\epsilon$ if
\begin{align}\label{eq: Error statistics}
    \bra{\Psi}\prod_{r\in \cR_i} A_r\ket{\Psi}&\ge 1-\epsilon\quad i\in\{1,2\},\notag\\
    \pm\bra{\Psi}\prod_{r\in \cC_i} A_r\ket{\Psi}&\ge 1-\epsilon \quad i\in\{1,2,3\},
\end{align}
where the minus sign in the second line is used for the third column only.
In the presence of errors, the subspace $V$ is no longer invariant
under the action of the operators $A_r$. However, the protocol is
still robust in the following sense. There exists
an ideal subspace $\hat{V}$ of dimension 4, along with an ideal state
$|\hat{\Psi}\rangle$ and ideal operators $\hat{A}_r$ whose fidelities
with respect to the actual state and operators are close to 1,
within errors linear in $\epsilon$.
Here the state fidelity is $F(|\hat{\Psi}\rangle,\ket{\Psi})=|\langle\hat{\Psi}\ket{\Psi}|^2$,
and the operator fidelity is defined as $F(\hat A_r,A_r)=\frac{1}{4}\mathrm{Tr}(\hat A_r A_r)$.
Formally, we have the following:

\begin{theorem}[Protocol Robustness]\label{thm 2}
If the ideal expectations are satisfied within error $\epsilon$,
then there exists $\hat V\subseteq \cH$, with $\mbox{dim}(\hat V)=4$,
Hermitian involutions $\hat A_{r} : \hat V \rightarrow \hat V$ for each $r\in G$
such that $\{\hat A_r, \hat A_s\}=0$ if $r\sim s$ and $[\hat A_r, \hat A_s]=0$ otherwise,
and a state $\hat{\ket{\Psi}}\in \hat V$ such that $\prod_{r\in \cC_i} \hat A_r \hat{\ket{\Psi}} =\hat{\ket{\Psi}}$ for $i\in\{1,2\}$, and such that they satisfy
\begin{align*}
     F(|\hat{\Psi}\rangle,\ket{\Psi})&\ge 1- \epsilon_0,\\
    F(\hat A_{r}, A_r)&\ge 1-\epsilon_i,\quad\quad\, \mbox{for}\ r \in \cC_{i},\\
\end{align*}
where $\epsilon_0=14\epsilon$, $\epsilon_1=0$, $\epsilon_2=25\epsilon/2$, and $\epsilon_3=(\sqrt{2}+\sqrt{14}+\sqrt{44})^2\epsilon/2\simeq69.5\epsilon$.
\end{theorem}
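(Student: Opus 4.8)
The plan is to run the rigidity argument of Theorem~\ref{thm 1} ``with a budget'': each exact operator identity is replaced by a norm estimate, and the errors are tracked as they accumulate. The starting point is to convert the five expectation bounds in Eq.~\eqref{eq: Error statistics} into statements about the vector $\ket{\Psi}$. For each context, the product $O_k=\prod_{r}A_r$ over a row or column is itself a Hermitian involution (the operators in a context commute and square to $\mathds 1$), so with target sign $s_k=+1$ for the two rows and columns $1,2$ and $s_k=-1$ for column $3$,
\begin{equation*}
\|(O_k-s_k\mathds 1)\ket{\Psi}\|^2=2\bigl(1-s_k\bra{\Psi}O_k\ket{\Psi}\bigr)\le 2\epsilon .
\end{equation*}
These five $\sqrt{2\epsilon}$ estimates are the only input; everything below is triangle inequalities built on top of them.

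Next I would construct the emergent tensor-product frame from the first column, which is precisely why $\epsilon_1=0$. Since $A_{12}$ and $A_{45}$ commute, I decompose $\cH$ into their four joint eigenspaces $\cH_{ab}$, $a,b\in\{0,1\}$, and build an orthonormal four-tuple $\{\ket{\phi_{ab}}\}$ adapted to them: take $\ket{\phi_{00}},\ket{\phi_{11}}$ from the normalized projections of $\ket{\Psi}$ onto the $(+,+)$ and $(-,-)$ sectors (on which the column-$1$ condition forces $\ket{\Psi}$ to live up to $O(\sqrt\epsilon)$), and generate $\ket{\phi_{01}},\ket{\phi_{10}}$ by the action of the ``$X$-type'' operators $A_{34},A_{16}$. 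Set $\hat V=\mathrm{span}\{\ket{\phi_{ab}}\}$ and define the ideal operators in this frame to be the exact two-qubit Paulis of Table~\ref{Table unknown} (Right): $\hat A_{12}=Z\otimes\mathds 1$, $\hat A_{45}=\mathds 1\otimes Z$ (diagonal in the frame, hence agreeing with $A_{12},A_{45}$ on $\hat V$ identically, so $\epsilon_1=0$), $\hat A_{16}=X\otimes\mathds 1$, $\hat A_{34}=\mathds 1\otimes X$, and the third-column ideals through the row relations, $\hat A_{56}=\hat A_{12}\hat A_{34}$ and $\hat A_{23}=\hat A_{45}\hat A_{16}$. By construction the six $\hat A_r$ obey the required anti-/commutation relations, and $|\hat\Psi\rangle=\tfrac{1}{\sqrt2}(\ket{\phi_{00}}+\ket{\phi_{11}})$ satisfies $\prod_{r\in\cC_i}\hat A_r|\hat\Psi\rangle=|\hat\Psi\rangle$ for $i\in\{1,2\}$.

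The state fidelity follows by bounding $\ket{\Psi}-|\hat\Psi\rangle$: the column-$1$ estimate confines $\ket{\Psi}$ to $\cH_{++}\oplus\cH_{--}$ up to $\sqrt{2\epsilon}$, and the remaining row and column conditions fix the weights and relative phase of the two components to the balanced Bell value; summing the squared contributions gives $|\langle\hat\Psi|\Psi\rangle|^2\ge 1-14\epsilon$. For the operator fidelities I use that, after projecting $A_r$ onto $\hat V$, the quantity $F(\hat A_r,A_r)=\tfrac14\mathrm{Tr}(\hat A_r A_r)$ is governed by the Hilbert--Schmidt distance via $F=1-\tfrac18\|\hat A_r-A_r\|_{\mathrm{HS}}^2$, so an $O(\sqrt\epsilon)$ bound on that distance yields an infidelity \emph{linear} in $\epsilon$. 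Because $\hat V$ is assembled from $\ket{\Psi}$, the action of each $A_r$ on a frame vector can be rewritten in terms of its action on $\ket{\Psi}$ and estimated with the step-one bounds; the second-column operators invoke one row relation, producing $\epsilon_2=25\epsilon/2$, while the third-column operators are products of two reconstructed operators, so their error is a three-term triangle sum $\|\hat A_{56}-A_{56}\|_{\mathrm{HS}}\le(\sqrt{8}+\sqrt{56}+\sqrt{176})\sqrt\epsilon$, which squared and divided by $8$ gives $\epsilon_3=\tfrac12(\sqrt2+\sqrt{14}+\sqrt{44})^2\epsilon$.

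The main obstacle I anticipate is the passage from single-vector estimates (how each operator acts on $\ket{\Psi}$) to genuine Hilbert--Schmidt bounds over the entire four-dimensional $\hat V$: this requires showing the constructed frame $\{\ket{\phi_{ab}}\}$ is $O(\sqrt\epsilon)$-orthonormal and that reconstructing the second- and third-column operators through the only-approximately-valid row relations does not inflate the error. Organizing these triangle inequalities so that the leading $\sqrt\epsilon$ terms recombine into exactly the stated constants---particularly the compounded, product-type bound for the third column---is the delicate bookkeeping on which the robustness statement hinges.
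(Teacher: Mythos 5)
Your overall plan (turn the five expectation bounds into $\sqrt{2\epsilon}$ vector estimates, build a four-dimensional frame tied to the first column so that $\epsilon_1=0$, then get fidelities from squared norms) has the right shape, and your first step coincides with the paper's Eqs.~\eqref{Difference: Column} and \eqref{Difference: Row}. But your construction of $\hat V$ is internally inconsistent, and the inconsistency breaks the rest of the argument. For $\hat A_{12}=ZI$ and $\hat A_{45}=IZ$ to agree with $A_{12},A_{45}$ on $\hat V$ (your $\epsilon_1=0$ claim), and for the frame to be orthonormal, the vectors $\ket{\phi_{01}},\ket{\phi_{10}}$ must lie exactly in the joint eigenspaces $\cH_{01},\cH_{10}$. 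But $A_{34}\ket{\phi_{00}}$ and $A_{16}\ket{\phi_{00}}$ do not lie there: $A_{34}$ neither commutes nor anti-commutes with $A_{45}$ exactly (likewise $A_{16}$ with $A_{12}$), and the only anti-commutation you can certify is approximate and holds \emph{only in the action on} $\ket{\Psi}$, not on its component $\ket{\phi_{00}}$. You implicitly concede this when you list ``$O(\sqrt\epsilon)$-orthonormality'' as an obstacle, which contradicts $\epsilon_1=0$; with the generated frame, even $F(\hat A_{12},A_{12})$ involves $\langle\phi_{00}|A_{16}A_{12}A_{16}|\phi_{00}\rangle$ and is not exactly $1$. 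Transferring the anti-commutator bound from $\ket{\Psi}$ to $\ket{\phi_{00}}$ can be done --- note that $\{A_{12},A_{16}\}$ commutes with $A_{12}$, so the images of the two components of $\ket{\Psi}$ are orthogonal and can be separated --- but it costs a factor $1/\sqrt{p}\approx\sqrt{2}$, where $p$ is the weight of $\ket{\Psi}$ in $\cH_{00}$ and itself needs a lower bound. Your constants would not survive this; indeed you never derive them (the $25/2$ actually originates in a five-term triangle inequality, Lemma~\ref{prop: Anticommutativity}, not in ``one row relation'').

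The second concrete gap is your mechanism for getting infidelity linear in $\epsilon$: the identity $F=1-\tfrac18\|\hat A_r-A_r\|_{\mathrm{HS}}^2$ is false precisely where you need it, since it requires $\Tr\big((PA_rP)^2\big)=4$, i.e.\ that $\hat V$ be exactly $A_r$-invariant --- which is what fails in the presence of errors. Without it, Cauchy--Schwarz gives only $F\ge 1-\tfrac12\|\hat A_r-PA_rP\|_{\mathrm{HS}}$, turning an $O(\sqrt\epsilon)$ distance into an $O(\sqrt\epsilon)$, not $O(\epsilon)$, infidelity. The usable identity is $F=1-\tfrac18\sum_{a,b}\|(A_r-\hat A_r)\ket{\phi_{ab}}\|^2$, but that forces you to control the action of $A_r$ on \emph{all four} frame vectors, whereas every estimate you possess controls only its action on $\ket{\Psi}$, which has essentially no overlap with $\ket{\phi_{01}},\ket{\phi_{10}}$. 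The paper avoids both problems: it proves $\|\{A_r,A_s\}\ket{\Psi}\|\le 5\sqrt{2\epsilon}$ (Lemma~\ref{prop: Anticommutativity}), then applies Jordan's lemma to $A_{12},A_{16},A_{34},A_{45}$ to split $\cH$ into four-dimensional two-qubit blocks on which $A_{12}=ZI$ and $A_{45}=IZ$ hold \emph{exactly} while $A_{34},A_{16}$ are rotated by angles $\phi_l,\theta_l$; choosing $\hat V=\mathrm{span}\{\sum_l\sqrt{p_l}\ket{ab_l}\}$ makes $\epsilon_1=0$ automatic, turns the column-2 fidelity into $\sum_l p_l\cos\theta_l\ge 1-\sum_l p_l\sin^2\theta_l\ge 1-\tfrac{25}{2}\epsilon$, and reduces column 3 to the single matrix element $\Re\langle\hat\Psi|\hat A_{12}\hat A_{34}A_{56}|\hat\Psi\rangle$ --- which is then identified with $\tfrac14\Tr(\hat A_{56}\bA_{56})$ only via a further commutation and Pauli-expansion argument (Lemma~\ref{Commutativity}) that your outline also lacks. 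Either adopt the Jordan decomposition or carry out the transfer-and-inflation analysis above with corrected constants; as written, the proposal does not close.
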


The proof of the above theorem is given in Sec.~\ref{sec: Robustness},
with some details deferred to Appendix~\ref{Appendix: state fidelity} and~\ref{Appendix: equation to derive 3rd column}.
The perfect fidelity of the first column operators is due to a choice of basis.
For simplicity, we have chosen our error bounds to be equal. 
Our results can be generalized to the case
of unequal errors for different contexts, 
however, we do not carry out this analysis here.

\section{Rigidity of Majorana Fermion Parity Measurements}\label{sec: Rigidity}

In this section we first prove Proposition~\ref{prop: dilation},
and then we prove Theorem \ref{thm 1} with the help of some lemmas.

\subsection{From POVM to Projective Measurements}

Given any POVM $\{Q_{a}\}$, $a=0, \ldots, d-1$,
acting on system $\cS$, Neumark proved~\cite{Peres1990-1} that 
there exists a projective measurement $\{\hat Q_a\}$
on an extended system $\cS \otimes \cE$,
with $\Tr(\hat Q_a(\rho\otimes\ket{0}\bra{0}_{\cE}))=\Tr(Q_a\rho)$ 
for all density operators $\rho$ on $\cS$, where the dimension of 
the extension $\cE$ equals the number of elements $d$ in the POVM.
The projectors are of the form
\begin{equation}
\hat{Q}_a=U_{\cS \cE}^{\dag} (\mathds 1_\cS \otimes \ket{a}\bra{a}_\cE) U_{\cS \cE},
\label{Stinespring}
\end{equation}
where $U_{\cS \cE}^{\;}$ is a unitary on the extended Hilbert space.

In the case of multiple POVMs $Q_{r}$,
where $r$ labels the POVM with elements $\{Q_{r,a}\}$, the POVMs can be extended
to projectors $\{\hat{Q}_{r,a}\}$ by adding several ancillas ($\otimes_r \cE_r$), one for each POVM.
The situation is depicted in Fig.~\ref{fig:dilation}.
Our task is to prove that for
$d=2$, $[\hat{Q}_{r,a},\hat{Q}_{s,b}]=0$ whenever
    $[Q_{r,a}, Q_{s,b}]=0$.

\begin{figure}[htb]
\includegraphics[width=1.0 \columnwidth]{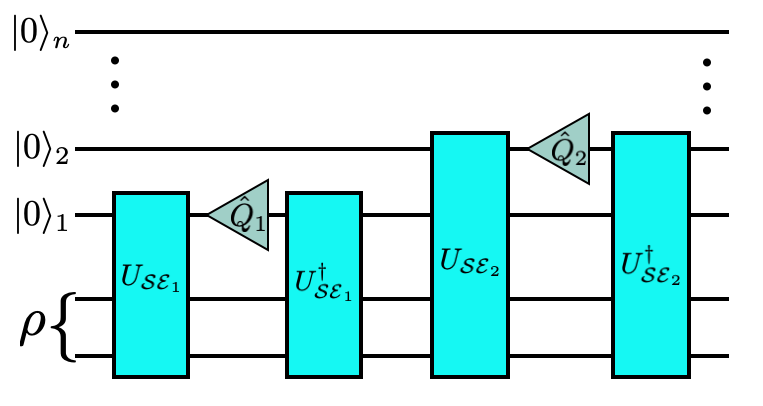}
 \caption{Dilation extension in quantum circuit language. }
 \label{fig:dilation}
\end{figure}

\begin{proof}[Proof of Proposition \ref{prop: dilation}]
Define the projector $\hat Q_{r,a}$  on the extended system $\cS \otimes \cE_r$ corresponding to POVM element $Q_{r,a}$ according to Eq.~\eqref{Stinespring}, with
\begin{eqnarray*}
U_{\cS \cE_r}\,\ket{\Psi}\otimes \ket{a}_r &=&\sum_{b=0}^1(-1)^{ab}\sqrt{Q_{r,a\oplus b}} \, \ket{\Psi}\otimes \ket{b}_r,
\end{eqnarray*}
where $\ket{a}_r$ is the basis-vector of the extension $\cE_r$ with $a\in\{0,1\}$ and $\oplus$ denotes addition mod 2.
We first show that $U_{\cS\cE_r}$ is unitary. Indeed, we have
\begin{multline*}
    \bra{\Phi}\bra{a}_r U_{\cS\cE_r}^{\dag}U_{\cS\cE_r}\ket{\Psi}\ket{b}_r \\
    = \sum_{c=0}^1(-1)^{(a+b)c}\bra{\Phi}\sqrt{Q_{r,a\oplus c}}\sqrt{Q_{r,b\oplus c}}\ket{\Psi}.
\end{multline*}
If $a=b$, then the last line above equals
\begin{equation*}
    \bra{\Phi}\sum_{c=0}^1 Q_{r,c}\ket{\Psi} = \bra{\Phi}\ket{\Psi},
\end{equation*}
whereas if $a\ne b$, it equals
\begin{equation*}
    \bra{\Phi}\big(\sqrt{Q_{r,a}}\sqrt{Q_{r,b}}-\sqrt{Q_{r,b}}\sqrt{Q_{r,a}}\,\big)\ket{\Psi} = 0,
\end{equation*}
where we've used $Q_{r,b} = \mathds1_{\cS}-Q_{r,a}$ and thus
$\sqrt{Q_{r,a}}$ commutes with $\sqrt{Q_{r,b}}$. Therefore, $U_{\cS\cE_r}$ is unitary.
Next, using the cyclic property of the trace, we compute
\begin{align*}
&\Tr(\hat{Q}_{r,a}(\rho\otimes \ket{0}\bra{0}_r))\\
=&\Tr( U_{\cS \cE_r}(\rho\otimes \ket{0}   \bra{0}_r)U_{\cS \cE_r}^{\dag}(\mathbb {1}_{\cS}\otimes \ket{a}  \bra{a}_r))\\
=&\Tr(\sqrt{Q_{r,a}}\rho \sqrt{Q_{r,a}})\\
=&\Tr(Q_{r,a}\rho).
\end{align*}
Finally, for $s\ne r$, in a block matrix representation with respect to the $\ket{a}_r$ basis,
\begin{equation} \notag
    U_{\cS \cE_r}=\left( \begin{array}{cc}
   \sqrt{Q_{r,0}} & \sqrt{Q_{r,1}} \\
   \sqrt{Q_{r,1}} & -\sqrt{Q_{r,0}} \\
  \end{array}  \right), \ \hat{Q}_{s,b}=\left( \begin{array}{cc}
   Q_{s,b} & 0 \\
   0 & Q_{s,b}\\
  \end{array}  \right).
\end{equation}
Therefore, $[Q_{r,a},Q_{s,b}]=0$ implies that  $[U_{\cS \cE_r}, \hat Q_{s,b}]=0$, and hence $[\hat Q_{r,a}, \hat Q_{s,b}]=0$.
\end{proof}

\subsection{Rigidity of the State and Observables}

We begin with a lemma that states
that operators with adjacent edges anti-commute in their action on $\ket{\Psi}$.
Since each $A_r$ has eigenvalues in $\{-1,+1\}$,
the ideal expectations are satisfied only if $\ket{\Psi}$ is
a $\pm1$ eigenstate of the products of operators in each context.
\begin{align}\notag
    \prod_{r\in \cR_i}A_r\ket{\Psi} &=\ket{\Psi}\quad i\in\{1,2\},\\
    \prod_{r\in \cC_i}A_r\ket{\Psi} &= \pm\ket{\Psi}\quad i\in\{1,2,3\},
    \notag
\end{align}
where again the minus sign in the last equation is for column $3$ only.
Using the fact that $A_r^2=\mathds1$, and the commutativity
of operators in each context, we can move operators freely between the
left and right sides of the above equations. For example, the identities
\begin{align*}
    A_{12}A_{34}\ket{\Psi} &= A_{56}\ket{\Psi},\\
    A_{34}\ket{\Psi} &= A_{16}\ket{\Psi},
\end{align*}
hold for row $1$ and column $2$, respectively.

\begin{lemma} \label{MS lemma}
Suppose the ideal expectations are satisfied. Then
$\{A_r,A_s\}\ket{\Psi}=0$, for $r\sim s$.
\end{lemma}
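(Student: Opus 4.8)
The plan is to exploit the Peres--Mermin magic-square structure of Table~\ref{Table unknown}. First I would record the eigenvalue equations forced by the ideal expectations: since each $A_r$ is a Hermitian involution and the operators within any context commute, each context product is itself an involution, so that $\prod_{r\in\cR_i}A_r\ket{\Psi}=\ket{\Psi}$ for $i\in\{1,2\}$, $\prod_{r\in\cC_i}A_r\ket{\Psi}=\ket{\Psi}$ for $i\in\{1,2\}$, and $\prod_{r\in\cC_3}A_r\ket{\Psi}=-\ket{\Psi}$. From these, using $A_r^2=\mathds 1$ and intra-context commutativity, I would extract the \emph{rewriting rules} that let a single operator hop across an eigenequation, for example $A_{12}\ket{\Psi}=A_{45}\ket{\Psi}$ from column~1, $A_{34}\ket{\Psi}=A_{16}\ket{\Psi}$ from column~2, $A_{56}\ket{\Psi}=-A_{23}\ket{\Psi}$ from column~3, together with the three-term row identities such as $A_{12}A_{34}\ket{\Psi}=A_{56}\ket{\Psi}$.

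Next, I would fix an adjacent pair $r\sim s$, say $r=(1,2)$ and $s=(1,6)$, which sit in distinct rows and columns, and evaluate the two orderings of the product. To compute $A_rA_s\ket{\Psi}$ I would first apply the column relation of $s$ to eliminate $A_s$, then the row relation of $r$ to collapse the result to a single operator acting on $\ket{\Psi}$; to compute $A_sA_r\ket{\Psi}$ I would apply the column relation of $r$ and then the row relation of $s$. For this pair the two reductions give $A_{12}A_{16}\ket{\Psi}=A_{56}\ket{\Psi}$ and $A_{16}A_{12}\ket{\Psi}=A_{23}\ket{\Psi}$. The resulting single-operator states are then compared through the one remaining column relation, here column~3, which supplies $A_{56}\ket{\Psi}=-A_{23}\ket{\Psi}$ and hence $A_rA_s\ket{\Psi}=-A_sA_r\ket{\Psi}$, i.e.\ $\{A_r,A_s\}\ket{\Psi}=0$. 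Each of the six adjacent pairs is dispatched by exactly this routing, the reductions cycling through both rows and all three columns.

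The crucial point---and the step I expect to be the genuine content rather than routine algebra---is the sign bookkeeping. Tracing any adjacent pair around the table, the accumulated sign equals the product of all five context eigenvalues, $(+1)(+1)(+1)(+1)(-1)=-1$; it is the single $-1$ attached to column~3 that flips commutation into anticommutation. This is precisely the Peres--Mermin obstruction to a noncontextual value assignment, recast as an operator identity on $\ket{\Psi}$: anticommutation on $\ket{\Psi}$ holds for every adjacent pair if and only if an odd number of contexts carry the eigenvalue $-1$, consistent with the remark following the ideal expectations. The main care required is to check that every hop invokes only intra-context commutativity and $A_r^2=\mathds 1$, so that no (as yet unproven) commutation between adjacent, hence anticommuting, operators is silently assumed.
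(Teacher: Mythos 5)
Your proposal is correct and follows essentially the same route as the paper's proof: column relations to trade single operators acting on $\ket{\Psi}$, row relations to collapse products, and the $-1$ of column~3 to produce the sign flip, e.g.\ $A_{12}A_{16}\ket{\Psi}=A_{12}A_{34}\ket{\Psi}=A_{56}\ket{\Psi}=-A_{23}\ket{\Psi}=-A_{16}A_{45}\ket{\Psi}=-A_{16}A_{12}\ket{\Psi}$. The only cosmetic difference is that you reduce the two orderings separately and compare, while the paper strings the same identities into a single chain; your closing remark on the accumulated sign being the product of the five context eigenvalues is the same observation the paper makes when noting that any set of ideal expectations with an odd number of $-1$ contexts would do.
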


\begin{proof}
We show that $\{A_{12},A_{16}\}\ket{\Psi}=0$.
Making repeated use of identities such as the ones above,
we compute
\begin{multline*}
    A_{12}A_{16}\ket{\Psi}=A_{12}A_{34}\ket{\Psi}=A_{56}\ket{\Psi}\\
    =-A_{23}\ket{\Psi}=-A_{16}A_{12}\ket{\Psi}=-A_{16}A_{12}\ket{\Psi}.
\end{multline*}
By symmetry of the table,
a similar argument shows that the same
relation holds for any $A_r$ and $A_s$ with $r\sim s$.
\end{proof}

We now construct a subspace and show that it is 
invariant under the action of $\cA$. Define $V'\subseteq \cH$ by
\begin{equation} \notag
   V' = \mathrm{span}\{\ket{\Psi}, A_{12}\ket{\Psi}, A_{16}\ket{\Psi}, A_{12}A_{16}\ket{\Psi}\}.
\end{equation}
\begin{lemma}
$A_rV'\subseteq V'$ for all $r\in G$.
\end{lemma}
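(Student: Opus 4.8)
The plan is to verify directly that each of the six generators $A_r$ maps every spanning vector of $V'$ back into $V'$; since the $A_r$ generate $\cA$, this is all that is needed. I would organize everything around the two generators $A_{12}$ and $A_{16}$, whose products with $\ket\Psi$ already furnish the four spanning vectors. The first step is to show that $A_r\ket\Psi\in V'$ for \emph{every} $r\in G$. This follows purely from the context relations $\prod_{r\in\cR_i}A_r\ket\Psi=\ket\Psi$ and $\prod_{r\in\cC_i}A_r\ket\Psi=\pm\ket\Psi$, together with $A_r^2=\mathds 1$ and the operator commutativity within each row or column. Concretely, the column relations give $A_{45}\ket\Psi=A_{12}\ket\Psi$ and $A_{34}\ket\Psi=A_{16}\ket\Psi$; the first row relation gives $A_{56}\ket\Psi=A_{12}A_{34}\ket\Psi=A_{12}A_{16}\ket\Psi$; and the third column relation gives $A_{23}\ket\Psi=-A_{56}\ket\Psi=-A_{12}A_{16}\ket\Psi$. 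Thus each $A_r\ket\Psi$ equals one of $\ket\Psi,\,A_{12}\ket\Psi,\,A_{16}\ket\Psi,\,A_{12}A_{16}\ket\Psi$ up to sign.

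Next I would check that $A_{12}$ and $A_{16}$ individually preserve $V'$. For $A_{12}$ this is immediate from $A_{12}^2=\mathds 1$, since it simply interchanges $\ket\Psi$ with $A_{12}\ket\Psi$ and $A_{16}\ket\Psi$ with $A_{12}A_{16}\ket\Psi$. For $A_{16}$ the only nontrivial values are its action on $A_{12}\ket\Psi$ and on $A_{12}A_{16}\ket\Psi$, and this is exactly where Lemma~\ref{MS lemma} enters: $\{A_{12},A_{16}\}\ket\Psi=0$ gives $A_{16}A_{12}\ket\Psi=-A_{12}A_{16}\ket\Psi\in V'$, and applying $A_{16}$ once more with $A_{16}^2=\mathds 1$ yields $A_{16}(A_{12}A_{16}\ket\Psi)=-A_{12}\ket\Psi\in V'$.

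Finally, for the remaining operators $A_{34},A_{45},A_{56},A_{23}$ I would reduce to the two cases already settled by commuting $A_r$ past $A_{12}$ or $A_{16}$. On $A_{12}\ket\Psi$ and $A_{16}\ket\Psi$ this is direct: $A_rA_{12}\ket\Psi=\pm A_{12}(A_r\ket\Psi)$ and likewise with $A_{16}$, where the sign is $+$ when $A_r$ shares a row or column with the generator and $-$ when the two are adjacent (Lemma~\ref{MS lemma}); either way the result lies in $V'$ because $A_r\ket\Psi\in V'$ and $A_{12},A_{16}$ preserve $V'$. For the vector $A_{12}A_{16}\ket\Psi$ I would use the key structural fact that no edge of $G$ is adjacent to both $(1,2)$ and $(1,6)$, so every $A_r$ commutes \emph{as an operator} with at least one of $A_{12},A_{16}$; writing the vector as either $A_{12}A_{16}\ket\Psi$ or $-A_{16}A_{12}\ket\Psi$ accordingly, I push $A_r$ through the commuting generator and reduce to its already-established action on $A_{16}\ket\Psi$ or $A_{12}\ket\Psi$.

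The main obstacle is that the anticommutation relations of Lemma~\ref{MS lemma} hold only on the state $\ket\Psi$, not as operator identities. Every invocation of anticommutation must therefore be applied to a vector of the form $(\mathrm{operator})\ket\Psi$, while any rearrangement of operators away from the state must rely on genuine operator commutativity within a shared row or column. Keeping this distinction straight—in particular, choosing for $A_{12}A_{16}\ket\Psi$ the generator that commutes with $A_r$ at the operator level—is the single point that requires care, and it is what forces the small case split rather than a one-line argument.
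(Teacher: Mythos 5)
Your proof is correct and takes essentially the same route as the paper's: a direct verification that each generator maps the four spanning vectors of $V'$ back into $V'$, using $A_{12}^2=\mathds{1}$, Lemma~\ref{MS lemma} applied only to vectors of the form $(\text{operator})\ket{\Psi}$, operator commutativity within contexts, and the row/column relations. Your packaging---first cataloguing $A_r\ket{\Psi}\in V'$ for every $r$, then pushing each remaining $A_r$ past whichever of $A_{12},A_{16}$ commutes with it at the operator level---is a slightly more systematic organization of the same case analysis the paper carries out explicitly (e.g., its computation $A_{56}A_{16}\ket{\Psi}=A_{56}A_{34}\ket{\Psi}=A_{12}\ket{\Psi}$), and your care in distinguishing state-level anti-commutation from operator-level commutation matches the paper's usage exactly.
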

\begin{proof}
Since $A_{12}^2=\mathds1$, $A_{12}V'\subseteq V'$. To see that $A_{16}V'\subseteq V'$, note that Lemma \ref{MS lemma} implies $A_{16}A_{12}\ket{\Psi}=-A_{12}A_{16}\ket{\Psi}$ and $A_{16}A_{12} A_{16}\ket{\Psi}=-A_{12}\ket{\Psi}$.
We next check that $A_{34}V'\subseteq V'$.
This follows from $A_{34}\ket{\Psi}= A_{16}\ket{\Psi}$,
and from the fact that $A_{34}$ commutes with $A_{16}$ and $A_{12}$.
By symmetry of the table, we also have that $A_{45}V'\subseteq V'$.
It remains to show that $A_{56}$ and $A_{23}$ act invariantly on $V'$.
Explicity,
\begin{align*}
A_{56}\ket{\Psi}&=A_{12} A_{34}\ket{\Psi}\in V'\\
A_{56}A_{12} \ket{\Psi}&=A_{12}A_{56}\ket{\Psi}\in V' \\
A_{56}A_{16} \ket{\Psi}&=A_{56}A_{34}\ket{\Psi}=A_{12}\ket{\Psi}\in V'\\
A_{56}A_{12}A_{16} \ket{\Psi}&=A_{56}A_{12}A_{34}\ket{\Psi}=\ket{\Psi}\in V'.
\end{align*}
\ignore{should the second line be written as  $A_{56}A_{12} \ket{\Psi}=A_{12}A_{56}=A_{34}\ket{\Psi}$.\\}
Similarly, by symmetry, we also have $A_{23} V'\subseteq V'$.
\end{proof}

Having shown that $V'$ is an invariant
subspace,
it follows that $V'=V=\cA\ket{\Psi}$.
We can now work with the operators restricted onto $V$.
Let $\bA_r=PA_rP$, with $P$ the projector onto $V$.
Note that $AP=PAP$ for all $A\in\cA$.
The next Lemma states that commutativity and anti-commutativity of operators
on a full Hilbert space is preserved under restriction onto a subspace.
\begin{lemma} \label{Commutativity}
Let $A$ and $B$ be Hermitian involutions, and let $P$ be a
projector such that $AP=PAP$. Then
\begin{align*}
[A,B]=0 &\implies [PAP,PBP]=0,\\
\{A,B\}=0 &\implies \{PAP,PBP\}=0.
\end{align*}
\end{lemma}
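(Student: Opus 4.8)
The plan is to first extract from the single hypothesis $AP = PAP$ the stronger statement that $A$ commutes with $P$ outright, and then to push both implications through by direct manipulation of the sandwiched products. Since $A$ and $P$ are both Hermitian, taking the adjoint of $AP = PAP$ gives $PA = PAP$ (because $(AP)^\dagger = PA$ and $(PAP)^\dagger = PAP$). Combining this with the hypothesis yields $AP = PA$, so that $[A,P]=0$ and in particular $PAP = AP = PA$. This is the only place where the Hermiticity of $A$ and the assumption on $P$ are used, and it is the conceptual heart of the lemma: the one-sided invariance condition $AP = PAP$ is secretly a two-sided commutation.

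With $PAP = AP = PA$ in hand, both implications follow from a single computation. Expanding the first product gives $PAP\cdot PBP = (AP)(PBP) = APBP = PABP$, where I use $PP = P$ and then $AP = PA$; note that $B$ is never moved past $P$, so no assumption relating $B$ and $P$ is required. Expanding the reversed product gives $PBP\cdot PAP = PBP\cdot AP = PBAP$, using $PAP = AP$ on the trailing substring. Thus the two orderings collapse to $PABP$ and $PBAP$, which differ only in the order of $A$ and $B$.

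The hypothesis on $A$ and $B$ now enters at the last step. If $[A,B]=0$ then $PABP = PBAP$, so the two products are equal and $[PAP,PBP]=0$. If $\{A,B\}=0$ then $PABP = -PBAP$, so the two products are negatives of one another and $\{PAP,PBP\}=0$. Both claimed implications follow.

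I anticipate no genuine obstacle here. Indeed, the involution property $A^2 = B^2 = \mathds 1$ is not actually needed for the argument, and only $A$ (not $B$) must satisfy the invariance condition, so the proof is insensitive to the precise way the hypotheses are stated. The single step that deserves care — and the only nontrivial idea — is recognizing that $AP = PAP$ upgrades to $[A,P]=0$ via adjoints; once that observation is made, the remainder is routine bookkeeping with $PP = P$ and the (anti)commutation of $A$ and $B$ on the full space.
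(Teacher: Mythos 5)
Your proof is correct and takes essentially the same approach as the paper: the paper's single chain of equalities $(PAP)(PBP)=PAPBP=(PAP)^{\dagger}BP=(AP)^{\dagger}BP=PABP=\pm PBAP=\pm PBPAP=\pm(PBP)(PAP)$ performs inline exactly the adjoint manipulation that you isolate up front as the observation $[A,P]=0$. Your side remark that neither the involution property nor any hypothesis relating $B$ to $P$ is needed is accurate, and applies equally to the paper's version of the argument.
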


\begin{proof}
$(PAP)(PBP)=PAPBP=(PAP)^{\dag}BP=(AP)^{\dag}BP=PABP=\pm PBAP=\pm PBPAP=\pm(PBP)(PAP)$,
with the plus or minus sign depending on whether $A$ and $B$ 
commute or anti-commute, respectively.
\end{proof}

We are now ready to prove Theorem $\ref{thm 1}$.
\begin{proof}[Proof of Theorem \ref{thm 1}]
We first determine the action on $V$ of the operators in $\cC_1$ and $\cC_2$. From Lemma \ref{Commutativity},
both $\bA_{12}$ and $\bA_{16}$ commute with both $\bA_{34}$ and $\bA_{45}$, and therefore
\begin{equation} \notag
    [\bA_{12}\bA_{16}, \bA_{45}\bA_{34}]=0.
\end{equation}
Since $\bA_{jk}^2=P$, $\bA_{12}\bA_{16}$
and $\bA_{45}\bA_{34}$ are unitary on $V$. Therefore, there exists an orthogonal basis for $V$ of simultaneous eigenstates of $\bA_{12}\bA_{16}$
and $\bA_{45}\bA_{34}$. Let $\ket{\alpha,\beta}$
be one such eigenstate satisfying
\begin{align*}
   \bA_{12}\bA_{16}\ket{\alpha,\beta}&=\alpha\ket{\alpha,\beta},\\
   \bA_{34}\bA_{45}\ket{\alpha,\beta}&=\beta\ket{\alpha,\beta},
\end{align*}
and also satisfying $\bra{\alpha,\beta}\ket{\Psi}\ne0$.
Such a state exists since $\ket{\Psi}\in V$.
From Lemma \ref{MS lemma}, $\{A_{12},A_{16}\}\ket{\Psi}=0$, 
and hence $\{\bA_{12}, \bA_{16}\}\ket{\Psi}=0$ by Lemma \ref{Commutativity}.
Similarly, $\{\bA_{34}, \bA_{45}\}\ket{\Psi}=0$.
These two equations imply $\alpha+\bar{\alpha}=0$ and $\beta+\bar{\beta}=0$, and thus $\alpha,\beta\in\{i,-i\}$, where $\bar \alpha$ denotes the complex conjugation of $\alpha$.

Now, define $\ket{\bar \alpha,{\beta}}=\bA_{12}\ket{\alpha, \beta}$. Note that
$\bA_{12}\bA_{16}(\bA_{12}\ket{\alpha,\beta})=\bA_{12}(\bA_{12}\bA_{16})^{\dag}\ket{\alpha,\beta}=\bar{\alpha}\,\bA_{12}\ket{\alpha,\beta}$,
and therefore
\begin{equation*}
\bA_{12}\bA_{16}\ket{\bar{\alpha},\beta}=\bar{\alpha}\ket{\bar\alpha,\beta}.
\end{equation*}
Similarly, defining $\ket{\alpha,\bar \beta} = \bA_{45}\ket{\alpha,\beta}$,
and $\ket{\bar\alpha,\bar \beta} = \bA_{12}\bA_{45}\ket{\alpha,\beta}$,
we see that $\ket{\alpha,\beta}$, $\ket{\alpha,\bar{\beta}}$, $\ket{\bar{\alpha},\beta}$, and $\ket{\bar{\alpha},\bar{\beta}}$
are joint eigenstates of $\bA_{12}\bA_{16}$ and $\bA_{45}\bA_{34}$ with
eigenvalues $(\alpha,\beta)$, $(\bar\alpha,\beta)$,
$(\alpha,\bar\beta)$, and $(\bar\alpha,\bar\beta)$, respectively.
These eigenstates are pairwise orthogonal, since $\alpha,\beta\in\{i,-i\}$.
Therefore, $V$ is a 4-dimensional subspace.
Next, note that
\begin{multline*}
\bA_{16}\bA_{12}\ket{\alpha,\beta}=(\bA_{12}\bA_{16})^{\dag}\ket{\alpha,\beta}
=\bar\alpha\ket{\alpha,\beta}\\
=-\alpha\ket{\alpha,\beta}=-\bA_{12}\bA_{16}\ket{\alpha,\beta}
\end{multline*}
Similar calculations applied to the remaining eigenstates of $\bA_{12}\bA_{16}$ and $\bA_{34}\bA_{45}$
show that $\{\bA_{12},\bA_{16}\}=0$ and
$\{\bA_{45},\bA_{34}\}=0$.
Therefore, there is a basis for $V$ in which
\begin{gather*}
\bA_{12}= ZI,\quad \bA_{34}= IX,\\
\bA_{45}= IZ,\quad \bA_{16}= XI.
\end{gather*}
We work in this basis for the remainder of the proof.
The next step is to determine $\ket{\Psi}$. From 
$\bA_{12}\bA_{45}\ket{\Psi}=ZZ\ket{\Psi}=\ket{\Psi}$, and
$\bA_{34}\bA_{16}\ket{\Psi}=XX\ket{\Psi}=\ket{\Psi}$,
it follows that
\begin{equation} \notag
    \ket{\Psi}=\frac{1}{\sqrt{2}}(\ket{0 0}+\ket{1 1}).
\end{equation}
The final step is to determine $\bA_{56}$ and $\bA_{23}$.
We begin with $\bA_{56}$. Since $[A_{56},A_{12}]=0=[A_{56},A_{34}]$, Lemma \ref{Commutativity} 
implies that $[\bA_{56},\bA_{12}]=0=[\bA_{56},\bA_{34}]$. Therefore, when 
expanded in a basis of two-qubit Pauli matrices, $\bA_{56}$ can only have non-zero weight on $II$, $ZI$, $IX$, and $ZX$.
However, $\bA_{56}\ket{\Psi}=\bA_{12}\bA_{34}\ket{\Psi}=ZX\ket{\Psi}$, 
which implies $\bA_{56}=ZX$, since the states $\ket{\Psi}$, $XI\ket{\Psi}$, 
$IX\ket{\Psi}$, and $ZX\ket{\Psi}$ are pairwise orthogonal. Similarly,
using $[\bA_{23},\bA_{45}]=0=[\bA_{23},\bA_{16}]$ and 
$\bA_{23}\ket{\Psi}=\bA_{45}\bA_{26}\ket{\Psi}$,
it follows that $\bA_{23}=XZ$.

\end{proof}

\section{Robustness to Errors}\label{sec: Robustness}
We now consider the situation where the ideal
statistics are satisfied to within error $\epsilon$.
We first prove an approximate version of Lemma ~\ref{MS lemma}.

\begin{lemma}\label{prop: Anticommutativity}
	Suppose the ideal expectations are satisfied to within error $\epsilon$. Then 	    $\|\{ A_r,  A_s\}\ket{\Psi}\|\le 5 \sqrt{2 \epsilon}$
	for all $r,s\in G$ with $r\sim s$.
\end{lemma}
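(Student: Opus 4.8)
The plan is to upgrade the exact identity chain behind Lemma~\ref{MS lemma} into an approximate one, paying a bounded penalty at each link. The quantitative input I would isolate first is an elementary fact: if $M$ is a Hermitian involution ($M=M^{\dag}$, $M^2=\mathds1$) and $s\in\{+1,-1\}$ satisfies $s\bra{\Psi}M\ket{\Psi}\ge 1-\epsilon$, then
\begin{equation*}
\|(M-s\mathds1)\ket{\Psi}\|^2=\bra{\Psi}(2\mathds1-2sM)\ket{\Psi}=2\big(1-s\bra{\Psi}M\ket{\Psi}\big)\le 2\epsilon,
\end{equation*}
so that $\|(M-s\mathds1)\ket{\Psi}\|\le\sqrt{2\epsilon}$. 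I would apply this to the products of the operators in each context: because operators inside a context commute, each such product is a Hermitian involution, and Eq.~\eqref{eq: Error statistics} says exactly that $s\bra{\Psi}M\ket{\Psi}\ge 1-\epsilon$ with $s=+1$ for the two rows and for columns~$1$ and~$2$, and $s=-1$ for column~$3$.

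The main step is to retrace the exact chain
\begin{multline*}
A_{12}A_{16}\ket{\Psi}=A_{12}A_{34}\ket{\Psi}=A_{56}\ket{\Psi}=-A_{23}\ket{\Psi}\\
=-A_{16}A_{45}\ket{\Psi}=-A_{16}A_{12}\ket{\Psi},
\end{multline*}
whose five equalities use exactly one context each (column~$2$, row~$1$, column~$3$, row~$2$, column~$1$, in order), and to replace every equality by a norm bound. For a given link I would write the two sides as $X\ket{\Psi}$ and $Y\ket{\Psi}$ with $X,Y$ unitary, and left-multiply by a norm-preserving unitary so that the difference becomes $(M-s\mathds1)\ket{\Psi}$ for the relevant context product $M$; for example the first link gives $\|A_{12}(A_{16}-A_{34})\ket{\Psi}\|=\|(A_{34}A_{16}-\mathds1)\ket{\Psi}\|\le\sqrt{2\epsilon}$, using that $A_{12}$ and $A_{34}$ are unitary and that $A_{34}A_{16}$ is the column-$2$ context product. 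Summing the five resulting bounds via the triangle inequality, and noting that the difference of the two endpoints of the chain is exactly $\{A_{12},A_{16}\}\ket{\Psi}$, I obtain
\begin{equation*}
\|\{A_{12},A_{16}\}\ket{\Psi}\|\le 5\sqrt{2\epsilon}.
\end{equation*}

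Finally, the same $2\times3$ table symmetry invoked in Lemma~\ref{MS lemma} furnishes an analogous five-context chain for every adjacent pair $r\sim s$, giving the stated uniform bound. I expect the difficulty here to be bookkeeping rather than conceptual: at each link I must confirm that the operator moved across $\ket{\Psi}$ really is a Hermitian involution (which rests on commutativity within the context) and that the signs are propagated correctly, in particular the lone $s=-1$ contributed by column~$3$, so that the endpoints of the chain combine into precisely the anticommutator $\{A_{12},A_{16}\}\ket{\Psi}$ and the five factors of $\sqrt{2\epsilon}$ add up to exactly the claimed constant $5\sqrt{2\epsilon}$.
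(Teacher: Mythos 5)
Your proposal is correct and takes essentially the same route as the paper: your elementary fact $\|(M-s\mathds{1})\ket{\Psi}\|\le\sqrt{2\epsilon}$ for context products $M$ is just the paper's per-context bounds $\|A_r\ket{\Psi}\mp A_s\ket{\Psi}\|\le\sqrt{2\epsilon}$ and $\|A_r\ket{\Psi}-A_sA_t\ket{\Psi}\|\le\sqrt{2\epsilon}$ in unitarily equivalent form. Your five-link chain (column 2, row 1, column 3, row 2, column 1) is exactly the paper's triangle-inequality chain yielding $5\sqrt{2\epsilon}$, concluded by the same table-symmetry remark for general $r\sim s$.
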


\begin{proof}
We show that $\|\{A_{12},A_{16}\}\ket{\Psi}\|\le5\sqrt{2\epsilon}$.
For $r$ and $s$ in the same column,
\begin{align}
    \| A_{r}\ket{\Psi} \pm A_{s} {\ket{\Psi}}\| &= \sqrt{2(1\pm\bra{\Psi}A_rA_s\ket{\Psi})} \notag \\
    &\le \sqrt{2\big(1-(1-\epsilon)\big)}\notag\\
    &=\sqrt{2\epsilon}, \label{Difference: Column}
\end{align}
where in the first line, the plus sign is used for column 3, and the
minus sign for columns 1 and 2.
Similarly, for both rows of the table, with $r$, $s$, and $t$ in the same row, 
\begin{equation}
    \| A_{r}\ket{\Psi}- A_{s} A_t {\ket{\Psi}}\|\le \sqrt{2\epsilon}. \label{Difference: Row}
\end{equation}
Therefore, by a chain of triangle inequalities, and using the fact that
$\|U\ket{\Psi}\|=\|\ket{\Psi}\|$ for any unitary $U$,
\begin{multline}
\|(A_{12}A_{16}+A_{16}A_{12})\ket{\Psi}\| \le \|(A_{12}A_{16}-A_{12}A_{34})\ket{\Psi}\| \\
+ \|(A_{12}A_{34}-A_{56})\ket{\Psi}\| + \|(A_{56}+A_{23})\ket{\Psi}\| \\
+ \|(-A_{23}+A_{16}A_{45})\ket{\Psi}\| +\|(-A_{16}A_{45}+A_{16}A_{12})\ket{\Psi}\| \\
\le 5\sqrt{2\epsilon}.\notag
\end{multline}
Using a similar argument for any $r\sim s$, one can prove $\|\{A_r,A_s\}\ket{\Psi}\|\le 5 \sqrt{2 \epsilon}$.
\end{proof}

By a corollary to Jordan's lemma,
which we prove in Appendix~\ref{Appendix: Jordan},
$\cH$ decomposes as $\cH=\bigoplus_l\cH_l$, where each
$\cH_l$ is 4-dimensional and invariant under the action of
$A_{12}$, $A_{16}$, $A_{34}$, and $A_{45}$.
Since both of $A_{12},A_{16}$ commute with both of $A_{34},A_{45}$, 
each invariant subspace $\cH_l$
in the Jordan decomposition factors as a tensor product of two qubits.
Therefore, there is a basis for each subspace such that  
\begin{align}\label{eq: jordan angles}
    A_{12}&=\bigoplus_l ZI,\ A_{34}=\bigoplus_l (\cos \phi_l IX+\sin \phi_l IZ),\notag\\
    A_{45}&=\bigoplus_l IZ,\ A_{16}=\bigoplus_l (\cos \theta_l XI+\sin \theta_l ZI),
\end{align}
with $\theta_l,\phi_l\in[-\frac{\pi}{2},\frac{\pi}{2}]$.
Label this chosen basis for each $\cH_l$ as $\{\ket{00_l}, \ket{01_l}, \ket{10_l}, \ket{11_l}\}$.

Next, with respect to this Jordan decomposition, one can write $\ket{\Psi}$ as 
\begin{equation} \notag
    \ket{\Psi}=\sum_l\sqrt{p_l}\ket{\Psi_l},
\end{equation}
where each  $\ket{\Psi_l}\in\cH_l$ and $\sum p_l=1$. We define the ideal 
subspace $\hat V$ as the linear span
\begin{gather} \notag
    \hat V = \mathrm{span}\big\{\ket{ab}=\sum_l \sqrt{p_l}\ket{a b_l}: a,b \in \{0,1\}\big\}.
\end{gather}

We define the ideal operators to be logical
Pauli product operators in the above basis. Specifically, $\hat 
A_{12}= ZI$, $\hat A_{34}= IX$, $\hat A_{56}= ZX$, $\hat A_{45} = IZ$, 
$\hat A_{16}= XI$, $\hat A_{23}=XZ$. Finally, we define the ideal state with
respect to the above basis as
\begin{equation}\label{eq: ideal state}
    \hat{\ket{\Psi}}= \frac{\ket{00}+\ket{11}}{\sqrt 2}.
\end{equation}

\begin{proof}[Proof of Theorem \ref{thm 2}]
By definition, $\{\hat A_r, \hat A_s\}=0$ for $r,s\in G$ with $r\sim s$, 
and the ideal state satisfies $\prod_{r\in \cC_i} \hat A_r \hat{\ket{\Psi}} =\hat{\ket{\Psi}}$ for $i\in\{1,2\}$.

We first calculate the state fidelity.
Define $|\hat {\Psi}_l\rangle=\frac{1}{\sqrt{2}}(\ket{00_l}+\ket{11_l})$.
Using the freedom to choose the overall phase in each subspace,
we set $\langle \hat \Psi_l| \Psi_l\rangle\ge0$. Therefore,
\begin{equation} \notag
    \langle\hat \Psi\ket{\Psi} = \sum_l p_l\langle\hat \Psi_l\ket{\Psi_l} \ge
    \sum_l p_l|\langle\hat \Psi_l\ket{\Psi_l}|^2.
\end{equation}
It will be convenient to work in the $Y$ basis
$\{\ket{0_Y 0_Y}, \ket{0_Y 1_Y}, \ket{1_Y 0_Y}, \ket{1_Y 1_Y}\}$ within
each Jordan subspace $\cH_l$. Here $\ket{0_Y}=(\ket{0}+i\ket{1})/\sqrt{2}$, and $\ket{1_Y}=(\ket{0}-i\ket{1})/\sqrt{2}$.
We expand $\ket{\Psi_l}$  as
\begin{equation} \notag
     \ket{\Psi_l}=\sum_{a, b\in\{0,1\}} c_{ab}^l\ket{a_Y b_Y}, 
\end{equation}
where $\sum_{a,b}|c^l_{ab}|^2=1$. In this basis,
the ideal state is
\begin{equation} \notag
    |\hat \Psi\rangle = \sum_l \sqrt{\frac{p_l}{2}}\big(\ket{0_Y1_Y}+\ket{1_Y0_Y}\big).
\end{equation}

Therefore,
\begin{align}
    \langle\hat \Psi\ket{\Psi} &\ge
    \sum_l p_l|\langle\hat \Psi_l\ket{\Psi_l}|^2\notag\\
    &= \sum_l p_l\frac{1}{2}|c^l_{01}+c^l_{10}|^2\notag\\
    &= \sum_l p_l\big(|c^l_{01}|^2+|c^l_{10}|^2-\frac{1}{2}|c^l_{01}-c^l_{10}|^2\big)\notag\\
    &= 1-\sum_l p_l\big(|c^l_{00}|^2+|c^l_{11}|^2\big)-\sum_l p_l\frac{1}{2}|c^l_{01}-c^l_{10}|^2\notag
\end{align}
In Appendix~\ref{Appendix: state fidelity}, we prove that
\begin{align}
    \sum_l p_l\big(|c^l_{00}|^2+|c^l_{11}|^2\big)&\le\frac{13}{2}\epsilon,\label{eq: state fid proof 1}\\
    \sum_l p_l|c^l_{01}-c^l_{10}|^2 &\le\epsilon.\label{eq: state fid proof 2}
\end{align}
Thus, the state fidelity is bounded according to
\begin{align}\label{Fidelity: State}
     F(|\hat{\Psi}\rangle,\ket{\Psi})&\ge(1-7\epsilon)^2\notag\\
     &\ge 1 - 14\epsilon.
\end{align}

Next we bound the fidelity of the operators, starting with the first column.
From Eq.~\eqref{eq: jordan angles} and the definition of the ideal
operators, for $r\in\cC_1$, $\Tr(\hat{A}_r A_r)=4$ and so $F(\hat{A}_r, A_r)=1$.
For the second column, Eq.~\eqref{eq: jordan angles} implies
\begin{equation} \notag
    F(\hat{A}_{16}, A_{16}) = \sum_l p_l\cos{\theta_l}.
\end{equation}
Combining Lemma~\ref{prop: Anticommutativity} with Eq.~\eqref{eq: jordan angles},
\begin{align}\label{eq: Col 2 fid proof}
    \frac{25}{2}\epsilon &\ge \frac{1}{4}\|\{A_{12}, A_{16}\}\ket{\Psi}\|^2 \notag\\
    &= \sum_l p_l\sin^2{\theta_l} \notag\\
    &= 1 - \sum_l p_l\cos^2{\theta_l}\notag\\
    &\ge 1 - \sum_l p_l\cos{\theta_l}.
\end{align}
where the last line follows from
$\theta_l\in[-\frac{\pi}{2}, \frac{\pi}{2}]$ and therefore $\cos{\theta_l}\ge 0$.
Combining the last two equations yields
\begin{equation*}
    F(\hat{A}_{16}, A_{16})\ge1-\frac{25}{2}\epsilon.
\end{equation*}
A similar calculation shows that $F(\hat{A}_{23}, A_{23})\ge1-\frac{25}{2}\epsilon$ also.

The final step is to bound the fidelities of the third column operators.
We begin with $A_{56}$. The $\epsilon$ error in the first row implies
\begin{equation}\label{eq: Co 3 proof 1}
    \|A_{12}A_{34}\ket{\Psi}-A_{56}|{\Psi}\rangle\|\le \sqrt{2\epsilon},
\end{equation}
and from the state fidelity, Eq.~\eqref{Fidelity: State}, we get
\begin{equation}\label{eq: Co 3 proof 2}
    \|A_{56}(\ket{\Psi}-|\hat{\Psi}\rangle)\|=\|\ket{\Psi}-|\hat{\Psi}\rangle\|\le\sqrt{14\epsilon}.
\end{equation}
We show in Appendix~\ref{Appendix: equation to derive 3rd column} that
\begin{equation}\label{eq: Co 3 proof 3}
    \|\hat{A}_{34}|\hat{\Psi}\rangle - A_{34}\ket{\Psi}\|\le\sqrt{44\epsilon}.
\end{equation}
Applying the triangle inequality to Eqs.~\eqref{eq: Co 3 proof 1}-\eqref{eq: Co 3 proof 3},
and using $\hat A_{12} = A_{12}$,
\begin{equation} \notag
\|\hat {A}_{12}\hat{A}_{34}|\hat{\Psi}\rangle-A_{56}|\hat{\Psi}\rangle\|\le (\sqrt{2}+\sqrt{14}+\sqrt{44})\sqrt{\epsilon},
\end{equation}
which implies 
\begin{eqnarray} \notag
    \Re\langle\hat{\Psi}|\hat{A}_{12}\hat{A}_{34}A_{56}|\hat{\Psi}\rangle\ge 1-\epsilon_3.
\end{eqnarray}
Since $[A_{12},A_{56}]=0$ and $\hat A^2_{12}=P$, 
Lemma~\ref{Commutativity} implies $[\hat A_{12}, \bA_{56}]=0$. 
Therefore, when expanded in a basis of two-qubit Pauli matrices, 
$\bA_{56}$ can only have $I$ or $Z$ acting on the first qubit. 
Since $\hat A_{12}\hat A_{34}=ZX$, only the $ZX$ component of $\bA_{56}$ contributes to
$\Re \langle \hat \Psi |\hat{A}_{12}\hat{A}_{34}\bA_{56}|\hat{\Psi}\rangle$,
as any other component either gives zero or a purely imaginary number.
Hence,
\begin{align}
    \Re \langle \hat \Psi |\hat{A}_{12}\hat{A}_{34}\bA_{56}|\hat{\Psi}\rangle &=
    \Re \langle \hat \Psi |\hat{A}_{56}\bA_{56}|\hat{\Psi}\rangle \notag\\
    &= \frac{1}{4}\Tr(\hat A_{56}\bA_{56}) \notag\\
    &= F(\hat A_{56}, A_{56}), \notag
\end{align}
and it follows that $F(\hat{A}_{56}, A_{56}, )\ge 1- \epsilon_3$. By a similar argument one can show that $F(\hat{A}_{23}, A_{23})\ge 1- \epsilon_3$.
\end{proof}

\section{Discussion and Outlook}
\label{Discussion}


We have shown that measurements of Majorana fermion parities can be 
self-tested. This fact provides a powerful tool for determining how consistent 
experimental data is with the existence of Majorana fermion modes.
Experimentally, our protocol requires the ability to 
measure 6 observables $A_{jk}$, which ideally correspond to
parities $P_{jk}$ between consecutive Majorana modes,
as shown in Fig~\ref{fig: experimental set-up}.
The expectation value of any context (set of observables in a row or column of Table 
\ref{Table unknown} (Left)) can be 
obtained by repeatedly preparing a specific initial state, measuring the 
operators in that context, and using Eqs.~\eqref{Expectation: row} 
and \eqref{Expectation: column}.
Given data from the measurements in each context,
we can express the result by an operator $W$, often 
called a contextuality witness[cite] in the quantum information literature,
\begin{multline}
    W= A_{12}A_{34}A_{56}+A_{45}A_{16}A_{23}\\
    +A_{12}A_{45}+  A_{34}A_{16}- A_{56}A_{23}.   
\end{multline}
This witness obeys an 
inequality, $\langle W \rangle \le 3$,
for any classical assignments of outcomes to the measured observables,
and is upper-bounded in quantum theory by $\langle W \rangle \le 5$.
The witness is not unique, as different choices of the initial
state result in different combinations of signs of the terms in
$W$, according to Table~\ref{Table: Expectation values} in Appendix~\ref{Appendix: Ideal statistics}.

\begin{figure}[htb]
\centerline{\includegraphics[width=0.65\columnwidth]{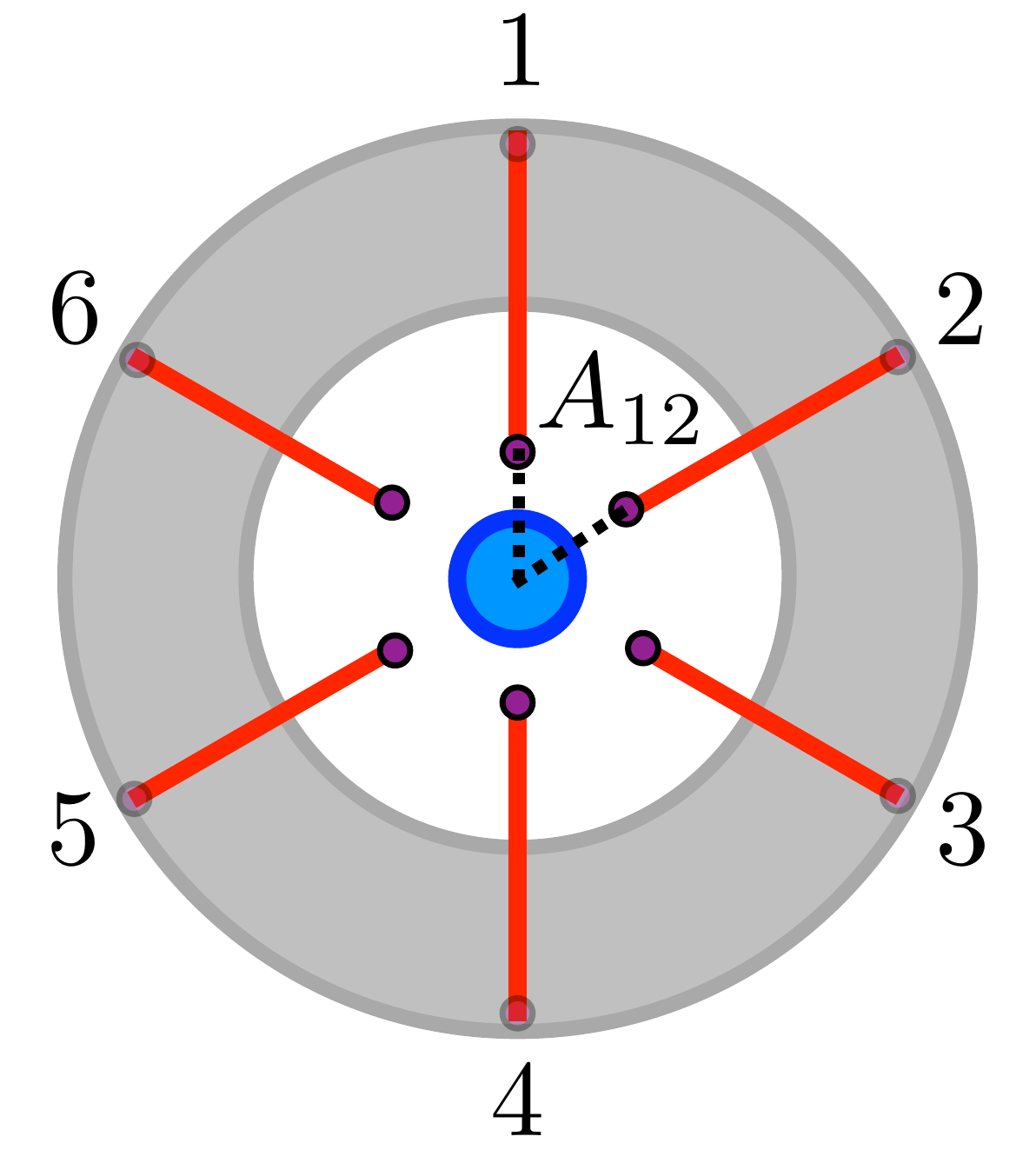}}
 \caption{Possible experimental setup to measure Majorana fermion parities $A_{jk}$ with eigenvalues $a_{jk}=\pm1$. Six (topological) semiconducting wires in proximity to a superconductor (indicated as a grey annulus) and a quantum dot in the center where parities are measured.}
 \label{fig: experimental set-up}
\end{figure}

For a fixed total parity (odd or even), $6$ 
Majorana fermions theoretically encode a two qubit subspace, where each qubit is encoded 
non-locally in $3$ Majorana modes in Fig. \ref{fig: experimental set-up}.
Our results, Theorems 1 and 2, imply that the maximal
value of $\langle W \rangle=5$ is obtained only if the
observables corresponding to adjacent parities anti-commute.
Furthermore, a small error
certifies that each $A_{jk}$ has high fidelity with the
corresponding Majorana parity operator. For example,
in a basis where the parities in the first row are perfect,
a $0.1\%$ error in the expectation value of each context implies
upper bounds on the error of the second and third column operators of $1.25\%$ and $6.95\%$, respectively.

We emphasize that although our proposal is to self-test
parities in Majorana modes, our protocol can be simulated
in other physical systems via the Jordan-Wigner mapping given
by Eq.~\eqref{majoranadef} and 
Table \ref{Table: Magic square}. Examples include trapped ions,
where Pauli product operators can be measured with
global entangling gates and the use of an ancilla~\cite{Leibfried2018}, or also
neutron beams entangled in energy, path and spin 
degrees of freedom~\cite{Cabello2008,Bartosik2009}.

We conclude with some open problems and suggestions
for future work. The robustness bounds in our Theorem 2
are certainly not tight, which raises the question
of how much they can be improved. It might be possible
to obtain a stronger robustness
statement using different methods such as those based
on a semidefinite programming hierarchy~\cite{Yang2014,Bancal2015},
or linear operator inequalities~\cite{Kaniewski2016}.
How does our formulation of robustness by constructing
an ideal subspace relate to the notion of robustness
as measured by an extraction map?
Finally, while our protocol certifies
a single state and a set of measurements, gates in
topological quantum computing are implemented by braiding.
It is therefore desirable to extend the protocol to include a self-test of braiding operations.

We strongly believe that certification of quantum
measurements in various physical scenarios is a promising technique for precision
measurement and quantum validation. We anticipate generalizations of our current approach to many other 
situations of physical interest exploring the frontiers of quantum mechanics. 

\begin{acknowledgements}
This research was conducted in part under the PREP
program with financial assistance from U.S. Department of Commerce,
National Institute of Standards and Technology.
Contributions to this article by workers
at the National Institute of Standards and Technology,
an agency of the U.S. Government, are not subject to U.S. copyright.

\end{acknowledgements}

\bibliography{library}

\newpage
\appendix

\section{Ideal Expectations and Statistics}\label{Appendix: Ideal statistics}

\begin{table}[htb]
    \begin{center}
    (a)
    \begin{tabular}{|c|c|c|c|c|} \hline
      \multirow{2}{*}{{\sf Context}}  & \multicolumn{4}{c|}{ $\mbox{Pr}(a_r, a_s, a_t|A_{r}, A_{s}, A_t)$ }\\ \cline{2-5}
     & $+++$ & $+--$ & $-+-$ & $--+$ \\ \hline
    ${\cal R}_1$ & $0.25$ &$0.25$ &$0.25$ &$0.25$ \\ \hline
    ${\cal R}_2$ & $0.25$ &$0.25$ &$0.25$ &$0.25$ \\ \hline
    \end{tabular}
    \end{center}
    \begin{center}
    (b)
    \begin{tabular}{|c|c|c|c|c|} \hline
    \multirow{2}{*}{{\sf Context}} & \multicolumn{4}{c|}{ $\mbox{Pr}(a_r, a_s|A_{r}, A_{s})$ }\\ \cline{2-5}
       & $++$ & $+-$ & $-+$ & $--$ \\ \hline
    ${\cal C}_1$ & $0.5$ &$0$ &$0$ &$0.5$ \\ \hline
    ${\cal C}_2$ & $0.5$ &$0$ &$0$ &$0.5$ \\ \hline
    ${\cal C}_3$ & $0$ &$0.5$ &$0.5$ &$0$ \\ \hline
    \end{tabular}
    \end{center}
    \caption{Probability Distribution of outcomes for (a) rows ${\cal R}_i$,  and (b) columns ${\cal C}_i$, when the initial state is $\ket{a_{36}^{(0)}=+1, a_{25}^{(0)}=+1, a_{14}^{(0)}=-1}$. The probability of measuring ($A_r, A_s, A_t$) (or $(A_r, A_s)$) and obtaining ($a_r, a_s, a_t$) (or $(a_r, a_s)$), with $a_r, a_s, a_t \in \{-1,+1 \}$,  is denoted as $\mbox{Pr}(a_r, a_s, a_t|A_{r}, A_{s}, A_t)$ (or $\mbox{Pr}(a_r, a_s|A_{r}, A_{s})$). Since the total parity of the state is $\cP=+1$, it has no amplitude in states with odd total parity.}
    \label{Table: Probality Distribution}
\end{table}

Operators in each context of Table \ref{Table: Magic square} realize complete sets of commuting observables (CSCO), allowing  preparation of states with definite parity assignments. In the following discussion, we consider the experimental preparation of an initial state with a well-defined  parity  $a_{36}^{(0)}$, $a_{25}^{(0)}$ and $a_{14}^{(0)}$, $a_{jk}^{(0)}\in \{-1,+1\}$, corresponding to the claimed parity observable $A_{jk}$. We then 
measure the contexts (sets of operators in  a row or column) of Table  \ref{Table unknown}. As an illustration, the ideal probability distribution of measurement outcomes of all contexts for the initial state $a_{36}^{(0)}=+1, a_{25}^{(0)}=+1, a_{14}^{(0)}=-1$ is given in Table  \ref{Table: Probality Distribution}. From the statistics of measurement outcomes one can calculate the expectation value of the product of the operators in each context by using
\begin{align}
     \langle A_{r}A_{s} \rangle&= \frac{\sum_{a_r, a_{s} } a_r a_{s} N(a_r,a_{s})}{\sum_{a_r, a_{s} }  N(a_r,a_{s})},\label{Expectation: row}\\
      \langle A_{r}A_{s} A_t \rangle&= \frac{\sum_{a_r, a_{s},a_t } a_r a_{s} a_t N(a_r,a_{s}, a_t)}{\sum_{a_r, a_{s}, a_t }  N(a_r,a_{s}, a_t)}, \label{Expectation: column}
\end{align}
where $N(a_r, a_s)$ (or $N(a_r, a_s, a_t)$) refers to the number of experimental outcomes with value $(a_r, a_s)$ (or $(a_r, a_s, a_t)$). Ideal expectation values of all contexts for different initial states with all possible $a_{36}^{(0)}, a_{25}^{(0)}, a_{14}^{(0)} \in \{-1,+1\}$ is given in Table \ref{Table: Expectation values}.

\begin{table}[htb]
    \centering
    \begin{tabular}{|c|c|c|c|c|c|}
    \hline
    {\sf Initial state}  & \multicolumn{5}{c|}{{\sf Expectation Values}}\\ \cline{2-6}
     $\ket{a_{36}^{(0)}, a_{25}^{(0)}, a_{14}^{(0)}}$ &  $\cR_1$ & $\cR_2$ & $\cC_1$ & $\cC_2$ & $\cC_3$ \\ \hline
    $\ket{+,+,+}$ & $-1$ & $-1$ & $-1$ & $-1$ & $-1$\\ \hline
    $\ket{+,+,-}$ & $+1$ & $+1$ & $+1$ & $+1$ & $-1$\\ \hline
    $\ket{+,-,+}$ & $+1$ & $+1$ & $+1$ & $-1$ & $+1$\\ \hline
    $\ket{+,-,-}$ & $-1$ & $-1$ & $-1$ & $+1$ & $+1$\\ \hline
    $\ket{-,+,+}$ & $+1$ & $+1$ & $-1$ & $+1$ & $+1$\\ \hline
    $\ket{-,+,-}$ & $-1$ & $-1$ & $+1$ & $-1$ & $+1$\\ \hline
    $\ket{-,-,+}$ & $-1$ & $-1$ & $+1$ & $+1$ & $-1$\\ \hline
    $\ket{-,-,-}$ & $+1$ & $+1$ & $-1$ & $-1$ & $-1$\\ \hline
    \end{tabular}
    \caption{Expectation values of product of operators in rows ($\cR_i$) and columns ($\cC_i$) for different initial states. }
    \label{Table: Expectation values}
\end{table}

\section{Jordan's Lemma} \label{Appendix: Jordan}
We prove a corollary, which we used in the main text,
of what is known as Jordan's Lemma in the quantum information literature.
A particularly simple proof of Jordan's Lemma
appears in Ref. [\onlinecite{Pironio2009}],
which we also include here for completeness.

\begin{lemma}[Jordan's Lemma]\label{lemma: Jordan}
    Let $A_1$ and $A_2$ be Hermitian involutions on a Hilbert space $\cH$.
    Then $\cH$ decomposes as a direct sum $\cH=\bigoplus_l\cH_l$,
    with $\dim \cH_l\in\{1,2\}$,
    and $A_1$ and $A_2$ act invariantly on each $\cH_l$.
\end{lemma}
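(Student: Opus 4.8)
The plan is to prove Jordan's Lemma by reducing to the spectral analysis of a single Hermitian operator built from $A_1$ and $A_2$. The key observation is that although $A_1$ and $A_2$ individually need not commute, the operator $S = A_1 A_2 + A_2 A_1 = \{A_1, A_2\}$ is Hermitian and commutes with both $A_1$ and $A_2$. Indeed, since $A_1^2 = A_2^2 = \mathds{1}$, one computes directly that $A_1 S = A_1 A_1 A_2 + A_1 A_2 A_1 = A_2 + A_1 A_2 A_1$ and $S A_1 = A_1 A_2 A_1 + A_2 A_1 A_1 = A_1 A_2 A_1 + A_2$, so $[A_1, S] = 0$, and by symmetry $[A_2, S] = 0$. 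First I would verify these commutation relations explicitly, as they are the engine of the whole argument.

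With $S$ Hermitian, the next step is to decompose $\cH$ into the eigenspaces of $S$, say $\cH = \bigoplus_\lambda \cH_\lambda$ where $S\ket{v} = \lambda\ket{v}$ for $\ket{v}\in\cH_\lambda$. Because $A_1$ and $A_2$ commute with $S$, each eigenspace $\cH_\lambda$ is invariant under both $A_1$ and $A_2$. It therefore suffices to show that within a single eigenspace $\cH_\lambda$ one can further decompose into invariant subspaces of dimension at most $2$. I would do this by picking any nonzero vector $\ket{v}\in\cH_\lambda$ and considering the subspace $\mathrm{span}\{\ket{v}, A_1\ket{v}\}$ (or if $\ket{v}$ is already an eigenvector of $A_1$, using $A_2\ket{v}$ instead). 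The claim is that this span is invariant under both involutions.

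The heart of the matter is verifying invariance of this two-dimensional span, and here is where the relation $A_1 A_2 A_1 = \lambda A_1 - A_2$ (rearranged from $S = \lambda\mathds{1}$ on $\cH_\lambda$) does the work. I would check that $A_1(A_1\ket{v}) = \ket{v}$ trivially, and that $A_2\ket{v}$ and $A_2 A_1\ket{v}$ both lie in $\mathrm{span}\{\ket{v}, A_1\ket{v}\}$: the relation $A_2 A_1\ket{v} = (\lambda A_1 - A_1 A_2 A_1)\ket{v}$ together with $A_2\ket{v} = (\lambda\mathds{1} - A_1 A_2)\ket{v}$ lets one express the action of $A_2$ in terms of $\ket{v}$ and $A_1\ket{v}$, after using $A_2\ket{v} = S\ket{v} - A_1 A_2\ket{v} = \lambda\ket{v} - A_1(A_2\ket{v})$ self-consistently. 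Iterating the two involutions never produces a vector outside this span, so it is invariant and has dimension $1$ or $2$. Taking an orthogonal complement within $\cH_\lambda$ and repeating yields the full decomposition.

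I expect the main obstacle to be the bookkeeping in the invariance check: one must handle the degenerate cases cleanly (when $\ket{v}$ happens to be a simultaneous eigenvector, giving a one-dimensional block) and confirm that orthogonal complements of invariant subspaces inside $\cH_\lambda$ remain invariant under the Hermitian involutions, so that the inductive peeling-off terminates with all blocks of dimension at most $2$. Since the paper attributes a ``particularly simple proof'' to Ref.~[\onlinecite{Pironio2009}], I anticipate the intended argument is exactly this spectral reduction via $\{A_1,A_2\}$, and the only real care needed is the orthogonality and invariance of complements, which follows because $A_1, A_2$ are Hermitian and hence preserve orthogonal complements of their invariant subspaces.
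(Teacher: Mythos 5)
Your reduction to the Hermitian operator $S=\{A_1,A_2\}$ is fine as far as it goes: $S$ is Hermitian, $[A_1,S]=[A_2,S]=0$, and each eigenspace $\cH_\lambda$ of $S$ is invariant under both involutions. The gap is in the next step, which is the heart of the proof: for an arbitrary nonzero $\ket{v}\in\cH_\lambda$, the subspace $\mathrm{span}\{\ket{v},A_1\ket{v}\}$ is \emph{not} invariant under $A_2$ in general. Counterexample: on $\cH=\bC^2\otimes\bC^2$ take $A_1=\sigma_z\otimes\mathds{1}$ and $A_2=\sigma_x\otimes\sigma_x$. These anti-commute, so $S=0$ and $\cH_{\lambda=0}=\cH$. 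Take $\ket{v}=\ket{+}\ket{0}$ with $\ket{+}=(\ket{0}+\ket{1})/\sqrt{2}$. Then $A_1\ket{v}=\ket{-}\ket{0}$, so $\mathrm{span}\{\ket{v},A_1\ket{v}\}=\bC^2\otimes\ket{0}$, while $A_2\ket{v}=\ket{+}\ket{1}$ is orthogonal to it. The algebra you invoke to justify invariance contains the error responsible for this: from $S\ket{v}=\lambda\ket{v}$ one gets $A_2A_1\ket{v}=\lambda\ket{v}-A_1A_2\ket{v}$, a relation between $A_2A_1\ket{v}$ and $A_1A_2\ket{v}$; your equation $A_2\ket{v}=S\ket{v}-A_1A_2\ket{v}$ silently replaces $A_2A_1\ket{v}$ by $A_2\ket{v}$, which is valid only when $A_1\ket{v}=\ket{v}$, and even then the resulting relation $(\mathds{1}+A_1)A_2\ket{v}=\lambda\ket{v}$ cannot be solved for $A_2\ket{v}$ because $\mathds{1}+A_1$ is singular. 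The eigenvalue equation for $S$ constrains only a symmetrized combination and leaves the component of $A_2\ket{v}$ outside your span completely free, as the counterexample shows.

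The argument can be repaired within your framework, but it requires choosing $\ket{v}$ more carefully: since $A_1$ is Hermitian and preserves $\cH_\lambda$, pick $\ket{v}\in\cH_\lambda$ to be an eigenvector of $A_1$, say $A_1\ket{v}=s\ket{v}$ with $s=\pm1$, and set $W=\mathrm{span}\{\ket{v},A_2\ket{v}\}$. Then $A_2W\subseteq W$ trivially, and $A_1A_2\ket{v}=\lambda\ket{v}-A_2A_1\ket{v}=\lambda\ket{v}-sA_2\ket{v}\in W$, so $W$ is invariant; your orthogonal-complement induction then finishes the proof. Note, however, that this is genuinely a different argument from the one you wrote: it promotes what you treated as the degenerate branch of your case split to the only branch. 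The paper avoids the issue entirely by diagonalizing the \emph{unitary} $U=A_1A_2$ rather than the Hermitian $S=U+U^{\dagger}$: for any eigenvector $U\ket{\alpha}=\alpha\ket{\alpha}$, invariance of $\mathrm{span}\{\ket{\alpha},A_1\ket{\alpha}\}$ is automatic, since $A_2\ket{\alpha}=A_1U\ket{\alpha}=\alpha A_1\ket{\alpha}$ and $A_2A_1\ket{\alpha}=U^{\dagger}\ket{\alpha}=\bar{\alpha}\ket{\alpha}$, with no case analysis needed. Diagonalizing $S$ is strictly coarser: its $\lambda$-eigenspace is the sum of the $U$-eigenspaces with eigenvalues $e^{\pm i\theta}$, $2\cos\theta=\lambda$, so after your reduction one must still, in effect, diagonalize $U$ (or pick $A_1$-eigenvectors) inside each block.
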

\begin{proof}
$A_1A_2$ is unitary since $(A_1A_2)(A_1A_2)^{\dag}=A_1A_2A_2A_1=\mathds 1$.
Since $A_1A_2$ is unitary, there exists an orthonormal
basis for $\cH$ of eigenstates of $A_1A_2$.
Let $\ket{\alpha}$ be any such eigenstate, where
$A_1A_2\ket{\alpha}=\alpha\ket{\alpha}$.
Define $\ket{\bar\alpha}=A_1\ket{\alpha}$.
Then $A_1A_2\ket{\bar\alpha}=\bar\alpha\ket{\bar\alpha}$, since
\begin{equation*}
    A_1A_2A_1\ket{\alpha}=A_1(A_1A_2)^{\dag}\ket{\alpha}=\bar\alpha A_1\ket{\alpha}.
\end{equation*}
The span of $\ket{\alpha}$ and $\ket{\bar\alpha}$
is invariant under $A_1$, and also under $A_2$, since
\begin{align*}
    A_2\ket{\alpha}&=A_2A_1\ket{\bar\alpha}=(A_1A_2)^{\dag}\ket{\bar\alpha}=\alpha\ket{\bar\alpha},\\
    A_2\ket{\bar\alpha}&=A_2A_1\ket{\alpha}=(A_1A_2)^{\dag}\ket{\alpha}=\bar\alpha\ket{\alpha}.
\end{align*}
Thus any eigenstate of $A_1A_2$ defines an invariant
subspace of dimension at most 2, and these 
eigenstates span $\cH$, which completes the proof.
\end{proof}

\begin{corollary}
For $k,k' \in \{1,2\}$, let $A_k$ and $B_{k'}$ be Hermitian involutions
with $[A_k, B_{k'}]=0$. Then $\cH$ decomposes
as 
$\cH=\bigoplus_l (\cH_{a_l}\otimes\cH_{b_l})$,
with $\cH_{a_l}$ and $\cH_{b_l}$ of dimension
at most $2$, and
$A_k=\bigoplus_l (A_{k_l}\otimes\mathds 1_l)$
and $B_{k'}=\bigoplus_l (\mathds 1_l\otimes B_{k'_l})$.
\end{corollary}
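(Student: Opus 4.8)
The plan is to reduce the corollary to two applications of Jordan's Lemma, glued together through the commuting unitaries $U_A=A_1A_2$ and $U_B=B_1B_2$. Both are unitary (each is a product of two Hermitian involutions), and since every $A_k$ commutes with every $B_{k'}$ one has $U_AU_B=U_BU_A$; moreover each involution conjugates its own unitary, $A_1U_A=U_A^{\dagger}A_1$ and $B_1U_B=U_B^{\dagger}B_1$, while $A_1$ commutes with $U_B$ and $B_1$ with $U_A$. Because $U_A$ and $U_B$ commute and are normal, I would first fix a joint orthonormal eigenbasis and work eigenspace by eigenspace: pick a vector $\ket v$ with $U_A\ket v=\mu\ket v$ and $U_B\ket v=\nu\ket v$, $|\mu|=|\nu|=1$.

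Second, I would verify that $W=\mathrm{span}\{\ket v,\,A_1\ket v,\,B_1\ket v,\,A_1B_1\ket v\}$ (note $A_1B_1=B_1A_1$) is invariant under all four operators. Invariance under $A_1$ and $B_1$ is immediate from $A_1^2=B_1^2=\mathds 1$ and $[A_1,B_1]=0$; invariance under $A_2=A_1U_A$ and $B_2=B_1U_B$ follows once one checks that $U_A$ and $U_B$ act diagonally on the four spanning vectors, which is exactly what the conjugation relations above give. Labelling the spanning vectors $\ket{00}=\ket v$, $\ket{10}=A_1\ket v$, $\ket{01}=B_1\ket v$, $\ket{11}=A_1B_1\ket v$, the same relations show $A_1=X\otimes\mathds 1$ and $U_A=\mathrm{diag}(\mu,\bar\mu)\otimes\mathds 1$ (hence $A_2$ acts on the first tensor factor only), and symmetrically $B_1=\mathds 1\otimes X$ and $U_B=\mathds 1\otimes\mathrm{diag}(\nu,\bar\nu)$ (so $B_2$ acts on the second factor only). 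Thus $W\cong\cK_{a}\otimes\cK_{b}$ with the $A$'s supported on $\cK_a$ and the $B$'s on $\cK_b$.

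Third, I would peel off $W$ and induct. Since all four operators are Hermitian, $W^{\perp}$ is again invariant, and finite-dimensionality of $\cH$ ensures the process terminates, yielding $\cH=\bigoplus_l(\cK_{a_l}\otimes\cK_{b_l})$ with $A_k=\bigoplus_l(A_{k_l}\otimes\mathds 1)$ and $B_{k'}=\bigoplus_l(\mathds 1\otimes B_{k'_l})$, as claimed.

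The main obstacle is the degenerate eigenvalue cases $\mu=\pm1$ or $\nu=\pm1$, where the four vectors need not be independent and the generic $2\times2$ block can collapse. I would handle these by noting that when $\mu=\bar\mu$ the operator $A_1$ preserves the joint $(\mu,\nu)$-eigenspace (it commutes with $U_B$, and $U_A$ is scalar there), so $\ket v$ can be chosen to be an eigenvector of $A_1$ as well; then $A_1\ket v\propto\ket v$ and the first factor $\cK_a$ drops to dimension $1$, and symmetrically for $\nu=\pm1$. This keeps every factor of dimension at most $2$. An alternative route---applying Jordan's Lemma to $(A_1,A_2)$, grouping the resulting blocks into isotypic components on which the commuting $B$'s act as $\mathds 1\otimes(\cdot)$ by Schur's lemma, and then applying Jordan's Lemma to the $B$'s on each multiplicity space---reaches the same decomposition; I expect the explicit $U_A,U_B$ construction to be cleaner and more in keeping with the elementary proof of Jordan's Lemma given above.
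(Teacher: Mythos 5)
Your proposal is correct and takes essentially the same route as the paper: both pass to the commuting unitaries $A_1A_2$ and $B_1B_2$, take a joint eigenvector $\ket{v}$, and identify the invariant span of $\{\ket{v},\,A_1\ket{v},\,B_1\ket{v},\,A_1B_1\ket{v}\}$ with a tensor product of two at-most-two-dimensional factors on which the $A_k$ and $B_{k'}$ act separately. Your explicit handling of the degenerate eigenvalues $\mu,\nu=\pm1$ and of the orthogonal-complement induction to obtain the full direct sum only makes rigorous what the paper's proof leaves implicit.
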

\begin{proof}
Note that $[A_1A_2, B_1B_2]=0$. Thus, there exists an orthonormal
basis for $\cH$ of simultaneous eigenstates of $A_1A_2$ and $B_1B_2$. Let $\ket{\alpha,\beta}$
be any such eigenstate, where $A_1A_2\ket{\alpha,\beta}=\alpha\ket{\alpha,\beta}$
and $B_1B_2\ket{\alpha,\beta}=\beta\ket{\alpha,\beta}$.
Define $\ket{\bar\alpha,\beta}=A_1\ket{\alpha,\beta}$,
$\ket{\alpha,\bar\beta}=B_1\ket{\alpha,\beta}$, and
$\ket{\bar\alpha,\bar\beta}=A_1B_1\ket{\alpha,\beta}$.
Then $\mathrm{span}\{\ket{\alpha,\beta}, \ket{\bar\alpha,\beta}, \ket{\alpha,\bar\beta}, \ket{\bar\alpha,\bar\beta}\}$
maps isomophically to $\mathrm{span}\{\ket{\alpha}, \ket{\bar\alpha}\}\otimes\mathrm{span}\{\ket{\beta}, \ket{\bar\beta}\}$.
By the argument in the proof of Lemma~\ref{lemma: Jordan},
$A_k$ and $B_{k'}$ act invariantly on the first and second tensor factors,
and trivially on the second and first tensor factors, respectively.
\end{proof}

We remark that in the main text we assume that each Jordan
subspace $\cH_l$ has dimension 4. This is done without loss of
generality, since we can extend any smaller dimensional
subspace to $4$ dimensions, with all operators acting trivially on the extension.


\section{Derivation of Eqs.~\eqref{eq: state fid proof 1} and \eqref{eq: state fid proof 2}.} \label{Appendix: state fidelity}
It will be convenient to work in the $Y$ basis
$\{\ket{0_Y 0_Y}, \ket{0_Y 1_Y}, \ket{1_Y 0_Y}, \ket{1_Y 1_Y}\}$ within
each Jordan subspace $\cH_l$. Here $\ket{0_Y}=(\ket{0}+i\ket{1})/\sqrt{2}$, and $\ket{1_Y}=(\ket{0}-i\ket{1})/\sqrt{2}$.
We expand $\ket{\Psi_l}$  as
\begin{equation} \notag
     \ket{\Psi_l}=\sum_{a, b\in\{0,1\}} c_{ab}^l\ket{a_Y b_Y}, 
\end{equation}
where $\sum_{a,b}|c^l_{ab}|^2=1$. In this basis,
the ideal state is
\begin{equation} \notag
    |\hat \Psi\rangle = \sum_l \sqrt{\frac{p_l}{2}}\big(\ket{0_Y1_Y}+\ket{1_Y0_Y}\big)
\end{equation}
Our first step is to bound $\sum_{l} p_l ( |c_{00}^l|^2+  |c_{11}^l|^2 )$. 
Applying Eq. \eqref{Difference: Column} to $\cC_1$ and $\cC_2$,
respectively, we obtain
\begin{equation} \notag
    \sum_l p_l (|c^l_{00}-c^l_{11}|^2 + |c^l_{01}-c^l_{10}|^2) \le \epsilon,
\end{equation}
\begin{equation} \notag
    \sum_l p_l (|e^{-i\phi_l}c^l_{00}+e^{i\theta_l}c^l_{11}|^2 + |e^{i\phi_l}c^l_{01}-e^{i\theta_l}c^l_{10}|^2) \le \epsilon,
\end{equation}
Eq.~\eqref{eq: state fid proof 2} follows from the first of these equations.
It also follows that
\begin{align}
    \sum_l p_l |c^l_{00}-c^l_{11}|^2 &\le \epsilon\label{eq: app coeff bound 1}.\\
    \sum_l p_l |e^{-i\phi_l}c^l_{00}-e^{i\theta_l}c^l_{11}|^2&\le \epsilon.\label{eq: app coeff bound 2}
\end{align}
Now,
\begin{align}
    &\abs{c^l_{11}}^2\abs{e^{i\theta_l}+e^{-i\phi_l}}^2\notag\\
    &= \abs{(e^{i\theta_l}c^l_{11}+e^{-i\phi_l}c^l_{00})-e^{-i\phi_l}(c^l_{00}-c^l_{11})}^2\notag\\
    &\le 2\big(\abs{e^{i\theta_l}c^l_{11}+e^{-i\phi_l}c^l_{00}}^2 + \abs{c^l_{00}-c^l_{11}}^2\big), \notag
\end{align}
\ignore{I think, in the second line
$e^{-i\phi_l}(c^l_{00}-c^l_{11})$ is wrong,
$e^{-i\phi_l}(c^l_{11}-c^l_{00})$ should be correct.\\}
where in the last line we used the fact that
$|x+y|^2\le2(|x|^2+|y|^2)$ for any $x,y\in\mathbb{C}$.
From Eqs.~\eqref{eq: app coeff bound 1} and  \eqref{eq: app coeff bound 2},
\begin{align}\label{eq: ap eq1}
    &\sum_l p_l \abs{c^l_{11}}^2\abs{e^{i\theta_l}+e^{-i\phi_l}}^2\notag\\
    &= \sum_l p_l \abs{c^l_{11}}^2(2+2 \cos(\theta_l+\phi_l))\le 4\epsilon
\end{align}
Similarly,
\begin{align}
    &\abs{c^l_{00}}^2\abs{e^{i\theta_l}+e^{-i\phi_l}}^2\notag\\
    &= \abs{(e^{i\theta_l}c^l_{11}+e^{-i\phi_l}c^l_{00})+e^{i\theta_l}(c^l_{00}-c^l_{11})}^2\notag\\
    &\le 2\big(\abs{e^{i\theta_l}c^l_{11}+e^{-i\phi_l}c^l_{00}}^2 + \abs{c^l_{00}-c^l_{11}}^2\big), \notag
\end{align}
from which it follows that
\begin{align}\label{eq: ap eq2}
    &\sum_l p_l \abs{c^l_{00}}^2\abs{e^{i\theta_l}+e^{-i\phi_l}}^2\notag\\
    &= \sum_l p_l \abs{c^l_{00}}^2(2+2 \cos(\theta_l+\phi_l))\le 4\epsilon.
\end{align}
Adding Eqs.~\eqref{eq: ap eq1} and \eqref{eq: ap eq2},
\begin{equation}\label{eq: cos(th + ph)}
    \sum_l p_l(|c^l_{00}|^2+|c^l_{11}|^2)(1+\cos(\theta_l+\phi_l))\le 4\epsilon.
\end{equation}
We now need the following lemma, which is similar to Lemma~\ref{prop: Anticommutativity}.
\begin{lemma}\label{lemma: deg 4 norm bound}
Suppose the ideal expectations are satisfied to within error $\epsilon$. Then 	    $\|A_{12}A_{34}\ket{\Psi}+A_{16}A_{45}\ket{\Psi}\|\le 3 \sqrt{2 \epsilon}$.
\end{lemma}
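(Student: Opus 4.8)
The plan is to reuse the telescoping triangle-inequality argument that proved Lemma~\ref{prop: Anticommutativity}, but with a shorter chain adapted to the degree-four expression $A_{12}A_{34}\ket{\Psi}+A_{16}A_{45}\ket{\Psi}$. The guiding observation is that in the ideal basis $A_{12}A_{34}=ZX=A_{56}$ and $A_{16}A_{45}=XZ=A_{23}$, so the vector to be bounded is morally $A_{56}\ket{\Psi}+A_{23}\ket{\Psi}$, which vanishes exactly by the signed third-column constraint. The strategy is therefore to route from $A_{12}A_{34}\ket{\Psi}$ across to $A_{16}A_{45}\ket{\Psi}$ by passing through the two third-column operators, paying one factor of $\sqrt{2\epsilon}$ at each of the three links.

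First I would record three elementary bounds, each a direct instance of Eqs.~\eqref{Difference: Column} and~\eqref{Difference: Row}. Taking $(r,s,t)=(56,12,34)$ in the first row, Eq.~\eqref{Difference: Row} gives $\|A_{56}\ket{\Psi}-A_{12}A_{34}\ket{\Psi}\|\le\sqrt{2\epsilon}$. Taking $(r,s,t)=(23,45,16)$ in the second row and using that same-context operators commute (so $A_{45}A_{16}=A_{16}A_{45}$) gives $\|A_{23}\ket{\Psi}-A_{16}A_{45}\ket{\Psi}\|\le\sqrt{2\epsilon}$. Finally, applying Eq.~\eqref{Difference: Column} to the third column, which carries the plus sign, gives $\|A_{56}\ket{\Psi}+A_{23}\ket{\Psi}\|\le\sqrt{2\epsilon}$.

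Next I would write the target vector as the telescoping sum
\begin{multline*}
A_{12}A_{34}\ket{\Psi}+A_{16}A_{45}\ket{\Psi}=\big(A_{12}A_{34}\ket{\Psi}-A_{56}\ket{\Psi}\big)\\
+\big(A_{56}\ket{\Psi}+A_{23}\ket{\Psi}\big)+\big(A_{16}A_{45}\ket{\Psi}-A_{23}\ket{\Psi}\big),
\end{multline*}
in which the two interior $A_{56}\ket{\Psi}$ and $A_{23}\ket{\Psi}$ terms cancel. Applying the triangle inequality and the three bounds above then yields $\|A_{12}A_{34}\ket{\Psi}+A_{16}A_{45}\ket{\Psi}\|\le 3\sqrt{2\epsilon}$, which is the claim.

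Unlike the five-link chain needed for Lemma~\ref{prop: Anticommutativity}, here three links suffice, so I do not expect a genuine obstacle; the work is almost entirely bookkeeping of signs so that the interior terms cancel and every residual is one of the controlled forms in Eqs.~\eqref{Difference: Column}--\eqref{Difference: Row}. The one point I would check carefully is the exact commutativity within each context, guaranteed by Proposition~\ref{prop: dilation}, since this is what makes each context product a Hermitian involution with real expectation and lets the $\pm$ rearrangements be carried out without incurring any additional error.
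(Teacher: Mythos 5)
Your proposal is correct and follows essentially the same route as the paper: the paper's proof uses exactly your three-term decomposition, bounding $\|(A_{12}A_{34}-A_{56})\ket{\Psi}\|$, $\|(A_{56}+A_{23})\ket{\Psi}\|$, and $\|(-A_{23}+A_{16}A_{45})\ket{\Psi}\|$ each by $\sqrt{2\epsilon}$ via Eqs.~\eqref{Difference: Row} and~\eqref{Difference: Column}, then applying the triangle inequality. Your write-up just makes explicit the sign bookkeeping and the within-context commutation $A_{45}A_{16}=A_{16}A_{45}$ that the paper leaves implicit.
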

\begin{proof}
\begin{align}
&\|A_{12}A_{34}\ket{\Psi}+A_{16}A_{45}\ket{\Psi}\|\notag\\
&\le \|(A_{12}A_{34}-A_{56})\ket{\Psi}\| + \|(A_{56}+A_{23})\ket{\Psi}\|\notag\\
&+ \|(-A_{23}+A_{16}A_{45})\ket{\Psi}\|\notag\\
&\le 3\sqrt{2\epsilon} \notag
\end{align}
where the last inequality follows from Eqs.~\eqref{Difference: Column} and \eqref{Difference: Row}.
\end{proof}

From the result of the lemma, we obtain
\begin{multline} \notag
    \sum_l p_l\big( (|c^l_{00}|^2+|c^l_{11}|^2)|e^{i\theta_l}+e^{i\phi_l}|^2\\
    + (|c^l_{01}|^2+|c^l_{10}|^2)|e^{i\theta_l}-e^{-i\phi_l}|^2 \big) \le 18\epsilon,
\end{multline}
and therefore,
\begin{equation} \notag
    \sum_l p_l (|c^l_{00}|^2+|c^l_{11}|^2)|e^{i\theta_l}+e^{i\phi_l}|^2 \le 18\epsilon,
\end{equation}
or,
\begin{equation}\label{eq: cos(th - ph)}
    \sum_l p_l(|c^l_{00}|^2+|c^l_{11}|^2)(1+\cos(\theta_l-\phi_l))\le 9\epsilon.
\end{equation}
Adding Eqs.~\eqref{eq: cos(th + ph)} and \eqref{eq: cos(th - ph)},
\begin{align}
    \frac{13}{2}\epsilon &\ge \sum_l p_l(|c^l_{00}|^2+|c^l_{11}|^2)(1+\cos{\theta_l}\cos{\phi_l})\notag\\
    &\ge \sum_l p_l(|c^l_{00}|^2+|c^l_{11}|^2). \notag
\end{align}
where in the last inequality we used $\theta_l$,$\phi_l\in[-\frac{\pi}{2}, \frac{\pi}{2}]$,
and so $\cos{\theta_l}\ge 0$ and $\cos{\phi_l}\ge 0$.
This last inequality is Eq.~\eqref{eq: state fid proof 1} in the main text.

\section{Derivation of Eq.~\eqref{eq: Co 3 proof 3}}\label{Appendix: equation to derive 3rd column}

By definition,
\begin{multline}
    \langle \hat \Psi_l|(\mathds{1}-\hat{A}_{34} A_{34}) |\Psi_l\rangle= \frac{(c^l_{01}+c^l_{10})-(c^l_{01}e^{i\phi_l}+c^l_{10}e^{-i\phi_l})}{\sqrt{2}}. \notag
\end{multline}
Therefore,
\begin{multline}
    \Re \langle \hat \Psi_l|(\mathds{1}-\hat{A}_{34} A_{34}) |\Psi_l\rangle\\
    = \frac{(1-\cos \phi_l)\Re (c^l_{01}+c^l_{10})+\sin \phi_l \Im (c^l_{01}-c^l_{10})}{\sqrt{2}}\\
    \le(1-\cos \phi_l) +\frac{\sin \phi_l\Im(c^l_{01}-c^l_{10})}{\sqrt{2}}. \notag
\end{multline}

Summing over the Jordan subspaces, we get
\begin{align}
    &\sum_l p_l \Re \langle \hat \Psi_l|(\mathds{1}-\hat{A}_{34} A_{34}) |\Psi_l\rangle\notag\\
    &\le 1-\sum_l p_l \cos \phi_l+\frac{1}{\sqrt{2}}\sum_l p_l \sin \phi_l\Im(c^l_{01}-c^l_{10})\notag\\
    &\le 1-\sum_l p_l \cos^2 \phi_l+\frac{1}{\sqrt{2}}\sum_l p_l \sin \phi_l|c^l_{01}-c^l_{10}|\notag\\
    &= \sum_l p_l \sin^2 \phi_l+\frac{1}{\sqrt{2}}\sum_l p_l \sin \phi_l|c^l_{01}-c^l_{10}|. \label{Real part: 1-A34A34}
\end{align}
The second term in the last line is bounded by
\begin{align*}
    \sum_l p_l \sin \phi_l|c^l_{01}-c^l_{10}|&\le \sqrt{\sum_l p_l \sin^2 \phi_l}\sqrt{\sum_l p_l |c^l_{01}-c^l_{10}|^2}\\
    &\le \sqrt{\frac{25\epsilon}{2}}\,\sqrt{\epsilon}, \notag
\end{align*}
where we've used Eq.~\eqref{eq: state fid proof 2} and the same argument
leading to Eq.~\eqref{eq: Col 2 fid proof}, applied to $\phi_l$.
Therefore,
\begin{align*}
    \sum_l p_l \Re \langle \hat \Psi_l|(\mathds{1}-\hat{A}_{34} A_{34}) |\Psi_l\rangle &\le \frac{25}{2}\epsilon + \frac{1}{\sqrt{2}}\frac{5\epsilon}{\sqrt{2}}\\
    &=15 \epsilon. \notag
\end{align*}
Using this bound we can say
\begin{align*}
    \Re\langle\hat \Psi|\hat{A}_{34} A_{34}| \Psi\rangle &\ge \Re\langle\hat \Psi| \Psi\rangle-|\Re\langle\hat \Psi|(\mathds{1}-\hat{A}_{34} A_{34})| \Psi\rangle|\\
    &\ge 1- 7\epsilon-15\epsilon\\
    &= 1-22\epsilon.
\end{align*}
Therefore,
\begin{align*}
    \|A_{34}\ket{\Psi}-\hat{A}_{34}|\hat \Psi\rangle\|&=\sqrt{2-2 \Re\langle\hat \Psi| \Psi\rangle}\\
    &\le \sqrt{44\epsilon}.
\end{align*}


\end{document}
